\title{Modular Runtime Complexity Analysis of Probabilistic While Programs}
\author{
  Martin Avanzini\\
  INRIA Sophia Antipolis\\France\\
  \url{martin.avanzini@inria.fr}
  \and
  Michael Schaper\\
  University of Innsbruck\\Austria\\
  \url{michael.schaper@student.uibk.ac.at}
  \and
  Georg Moser \\
  University of Innsbruck\\Austria\\
  \url{georg.moser@uibk.ac.at}
}
\theoremstyle{plain}
\newtheorem{proposition}{Proposition}
\newtheorem{lemma}{Lemma}
\newtheorem{theorem}{Theorem}
\newtheorem{corollary}{Corollary}
\theoremstyle{definition}
\newtheorem{definition}{Definition}
\newtheorem{example}{Example}
\theoremstyle{remark}
\begin{document}
\maketitle

\section{Preliminaries}
\paragraph*{Multisets}
A (possibly infinite) multiset over a set $A$ is a mapping $M: A \to \N$.
The \emph{union} $\biguplus_{i\in I}M_i$
of a countably many multisets $M_i$ is defined by
\[
\pa[\Big]{\biguplus_{i\in I}M_i}(a) \defeq \sum_{i\in I}M_i(a)
\]
which forms a multiset if and only if
$\sum_{i\in I}M_i(a)$ is finite for every $a\in A$.
The \emph{sum} of a multiset $M$
with respect to $f : A \to \Realpos$ is defined by
\[
\sum_{a\in M}f(a) \defeq \sum_{a \in A} M(a) \cdot f(a)
\]
We use set-like notations for multisets:
$\varnothing$ denotes the \emph{empty} multiset $\varnothing(a) \defeq 0$,
$\mset{a_i \mid i \in I}$
is the multiset $M$ with $M(a) = |\set{i\in I \mid a_i = a}|$,
and
$\mset{a_1, \dots, a_n}$
is its special case where $I = \set{1,\dots,n}$ is finite.

\paragraph*{Weighted Abstract Reduction Systems}
A \emph{weighted ARS} over state space $A$ is a ternary relation
${\to} \subseteq A \times A \times \Realnonneg$.
We write $a \to\AT{c} b$ meaning $\tp{a,b,c} \in {\to}$.
We define the ARS ${\to^c} \subseteq A \times A$ \emph{induced by $\to$ at cost $c \in \Realnonneg$},
by the following inference rules:
\[
\infer{a \to^0 a}{}\qquad
\infer{a \to^c b}{a \to\AT{c} b}\qquad
\infer{a \to^{c+d} b}{a \to^c a' \quad a'\to^d b}
\]
We say $a \in A$ is a \emph{normal form} with respect to $\to$ if
there exists no $b \in A$ with $a \to^+ b$.
The set of normal forms with respect to $\to$ is denoted by $\NF(\to)$.
The \emph{potential} of $a \in A$ with respect to $\to$ is defined by
$\pot[\to]{a} \defeq \sup\set{c \mid \exists b.\ a \to^c b}$.
The weighted ARS $\to$ is called \emph{strongly bounded} on $S \subseteq A$, $\SB_\to(S)$,
if for every $a \in S$, there exists $p \in \Realnonneg$ such that
$a \to^c b$ implies $c \le p$;
This is equivalent to saying that $\pot[\to]{a} \leq \infty$ for every $a \in S$.

\paragraph*{Weighted Probabilistic Abstract Reduction Systems}

A \emph{multidistribution} on a set $A$ is a multiset $\mdist$
of pairs of $a\in A$ and
$0 < p \le 1$, written $p:a$, satisfying
$\sz{\mdist} \defeq \sum_{p:a \in \mdist} p \ \le\ 1$.
We denote the set of multidistributions on $A$ by $\MDist{A}$.
Multidistributions are closed under \emph{convex multiset unions}
$\biguplus_{i\in I} p_i\cdot\mdist_i \defeq \sum_{i\in I}p_i \cdot \sz{\mdist_i} \leq 1$
for every finite or countable infinite index set $I$ and
probabilities $p_i > 0$ with $\sum_{i\in I}p_i \le 1$,
where \emph{scalar multiplication} is defined by
$p \cdot \mset{q_i:a_i \mid i \in I} \defeq \mset{p\cdot q_i : a_i \mid i \in I}$
for $0 < p \leq 1$.

The restriction of a multidistribution $\mdist \in \MDist{A}$ to a set $P \subseteq A$
is defined by $\restrict{\mdist}{P} \defsym \prms{p : a \mid p : a \in \mdist, a \in P}$.
For a function $f : A \to B$, we denote by $\fmap{f}$
its homomorphic extension $\fmap{f}: \MDist{A} \to \MDist{B}$ defined by
\[
  \fmap{f}\bigl(\prms{p_i:a_i \mid i \in I}\bigr)
  \defsym \prms{p_i:f(a_i) \mid i \in I}\tpkt
\]

For $\mdist \in \MDist{A}$, we define
the \emph{expectation} of a function $f \ofdom A \to \Realposinf$
as $\E{\mdist}(f) \defsym \sum_{p:a \in \mdist} p \cdot f(a)$.
Notice that
$\E{\biguplus_{i \in I} p_i \cdot \mdist_i}(f) = \sum_{i \in I} p_i \cdot \E{\mdist_i}(f)$.

\begin{definition}
  A \emph{weighted probabilistic} ARS over $A$ is a set ${\to} \subseteq A \times \MDist{A} \times \Realnonneg$.
  As before, we may write $a \to\AT{c} \mu$ for $\tp{a,\mu,c} \in {\to}$.
  We define the weighted ARS ${\wars\to}$ over $\MDist{A}$ \emph{induced by} $\to$ as follows:
  \[
    \infer{\mu \wars\to\AT{0} \mu}{}
    \qquad
    \infer{\mset{1:a} \wars\to\AT{c} \mu}{a \to\AT{c} \mu}
    \qquad
    \infer{
      \biguplus_{i\in I}p_i\cdot \mu_i \wars\to\AT{c} \biguplus_{i\in I}p_i\cdot \nu_i
    }{
      \forall i\in I.\ \mu_i \wars\to\AT{c_i} \nu_i
      & c = \sum_{i\in I} p_i\cdot c_i
    }
  \]
\end{definition}

For a weighted probabilistic ARS $\to$, let us define $\pot[\to]{a} \defeq \pot[\wars\to]{\mset{1:a}}$.
A weighted probabilistic ARS $\to$ over $A$ is \emph{strongly bounded} on a set $S \subseteq A$
if $\SB_{\wars\to}(\set{\mset{1:a} \mid a \in S})$, i.e.,
$\pot[\to]{a} < \infty$ for all $S \in A$.

\section{Probabilistic While}

We consider an imperative language $\lang$ in the spirit of Dijkstra's \emph{Guarded Command Language} \cite{Dijkstra75},
endowed with primitives for sampling from discrete distributions as well as non-deterministic and probabilistic choice.
Let $\Var$ denote a finite set of integer-valued variables $\var,\vartwo,\dots$.
We denote by $\Store \defsym \Var \to \Z$ the set of \emph{stores},
that associate variables with their integer contents.
The syntax of \emph{program commands} $\Cmd$ over $\Var$ is given by the following grammar.
\newcommand{\lbl}[1]{(#1)}
\newcommand{\desc}[1]{\text{\emph{#1}}}
\begin{align*}
  \lbl{\Cmd} &  & \cmd,\cmdtwo & \bnfdef \skipc                             &  & \desc{effectless operation}     \\
             &  &              & \mid  \tickc{r}                            &  & \desc{resource consumption}     \\
             &  &              & \mid  \haltc                               &  & \desc{termination}              \\
             &  &              & \mid  \var \passign \expd                  &  & \desc{probabilistic assignment} \\
             &  &              & \mid  \ifc[\bexptwo]{\bexp}{\cmd}{\cmdtwo} &  & \desc{conditional}              \\
             &  &              & \mid  \whilec[\bexptwo]{\bexp}{\cmd}       &  & \desc{while loop}               \\
             &  &              & \mid  \ndc{\cmd}{\cmdtwo}                  &  & \desc{non-deterministic choice} \\
             &  &              & \mid  \pc{p}{\cmd}{\cmdtwo}                &  & \desc{probabilistic choice}     \\
             &  &              & \mid  \cmd; \cmdtwo                        &  & \desc{sequential composition}.
\end{align*}
In this grammar,
$\bexp \in \BExp$ denotes a \emph{Boolean expression} over $\Var$ and
$\expd \in \DExp$ an \emph{Integer-valued distribution expression} over $\Var$.
With $\evdist{\cdot} \ofdom \DExp \to \Store \to \Dist{\Z}$
we denote the evaluation functions of distribution expressions, i.e.,
$\evdist{\expd}(\st)$ gives the result of evaluating $\expd$
under the current store $\st$.
For Boolean expressions $\bexp \in \BExp$ and $\st \in \Store$, we
indicate with $\st \entails \bexp$ that $\bexp$ holds when
the variables in $\bexp$ take values according to $\st$.

Program commands are fairly standard. The command $\skipc$ is a no-op,
$\haltc$ terminates the execution.
The command $\tickc{r}$ consumes $r \in \Q^+$ resource units.
The command $\var \passign \expd$ assigns a value sampled from $\expd(\st)$ to $\var$,
for $\st$ the current store.
The usual non-probabilistic assignment $\var \passign \exp$
for expressions $\exp \in \Exp$ is recovered by
the probabilistic assignment $\var \passign \expd_\exp$,
where $\expd_\exp(\st) \defsym \{1:\evexp{\exp}(\st)\}$.
The commands $\ifc[\bexptwo]{\bexp}{\cmd}{\cmdtwo}$ and $\whilec[\bexptwo]{\bexp}{\cmd}$ have
the usual semantics, with $\bexptwo$ an assertion that has to hold when entering the command.
We abbreviate $\ifc[\bexptwo]{\bexp}{\cmd}{\cmdtwo}$ and $\whilec[\bexptwo]{\bexp}{\cmd}$
by $\ifc{\bexp}{\cmd}{\cmdtwo}$ and $\whilec{\bexp}{\cmd}$ when $\bexp$ is the trivial assertion $\top$
that is always true.
The command $\ndc{\cmd}{\cmdtwo}$ executes either $\cmd$ or $\cmdtwo$, in a non-deterministic fashion.
Our analysis takes a demonic view on non-determinsmn, assuming that the branch with worst-case resource consumption is taken.
In contrast, the probabilistic choice $\pc{p}{\cmd}{\cmdtwo}$ executes $\cmd$ with probability $p \in \Prob$ and with probability
$1-p$ the command $\cmdtwo$.

\begin{example}[Random Walk]
  The program
  \begin{lstlisting}[style=pwhile,emph={x}]
    while (x > 0) { tick(1); {x := x + 1} $\pci{p}$ {x := x - 1}  }
  \end{lstlisting}
  describes a random walk over $\N$. The resource metric taken, via the command \pwhile!tick(1)! in the loop body, gives the number of loop iterations.
\end{example}

\subsection{Small Step Operational Semantics}
\begin{figure}
  \centering
  \begin{framed}
    \small
    \[
      \Infer[step][Skip]{\cfg{\skipc}{\st} \whilePARS[0] \st}{}
      \quad
      \Infer[step][Tick]{\cfg{\tickc{r}}{\st} \whilePARS[r] \st}{}
      \quad
      \Infer[step][Halt]{\cfg{\haltc}{\st} \whilePARS[0] \aborted}{}
    \]
    \[
      \Infer[step][Assign]{
        \cfg{\var \passign \expd}{\st}
        \whilePARS[0] \prms{ d(\st)(i) : \upd{\st}{\var}{i} \mid i \in \Z, d(\st)(i) > 0}
      }{}
    \]
    \[
      \Infer[step][IfTrue]
      {\cfg{\ifc[\bexptwo]{\bexp}{\cmd}{\cmdtwo}}{\st} \whilePARS[0] \cfg{\cmd}{\st}}
      { \st \entails \bexptwo \land \bexp }
    \]
    \[
      \Infer[step][IfFalse]
      {\cfg{\ifc[\bexptwo]{\bexp}{\cmd}{\cmdtwo}}{\st} \whilePARS[0] \cfg{\cmdtwo}{\st}}
      { \st \entails \bexptwo \land \neg\bexp }
    \]
    \[
      \Infer[step][IfFail]
      {\cfg{\ifc[\bexptwo]{\bexp}{\cmd}{\cmdtwo}}{\st} \whilePARS[0] \aborted}
      { \st \entails \neg\bexptwo }
    \]
    \[
      \Infer[step][WhileStep]
      {\cfg{\whilec[\bexptwo]{\bexp}{\cmd}}{\st} \whilePARS[0] \cfg{\cmd; \whilec[\bexptwo]{\bexp}{\cmd}}{\st}}
      { \st \entails \bexptwo \land \bexp }
    \]
    \[
      \Infer[step][WhileFin]
      {\cfg{\whilec[\bexptwo]{\bexp}{\cmd}}{\st} \whilePARS[0] \st}
      { \st \entails \bexptwo \land \neg\bexp }
    \]
    \[
      \Infer[step][WhileFail]
      {\cfg{\whilec[\bexptwo]{\bexp}{\cmd}}{\st} \whilePARS[0] \aborted}
      { \st \entails \neg\bexptwo }
    \]
    \[
      \Infer[step][ChoiceL]
      {\cfg{\ndc{\cmd}{\cmdtwo}}{\st} \whilePARS[0] \cfg{\cmd}{\st}}{}
      \quad
      \Infer[step][ChoiceR]
      {\cfg{\ndc{\cmd}{\cmdtwo}}{\st} \whilePARS[0] \cfg{\cmdtwo}{\st}}{}
    \]
    \[
      \Infer[step][ProbChoice]
      {\cfg{\pc{p}{\cmd}{\cmdtwo}}{\st} \whilePARS[0] \prms{ p : \cfg{\cmd}{\st}, 1-p : \cfg{\cmdtwo}{\st}}}{}
    \]
    \[
    \begin{array}{c}
      \Infer[step][Compose]
      {\cfg{\cmd;\cmdtwo}{\st} \whilePARS[r] \fmap{\evstep{\cmdtwo}}(\mdist)}
      {\cfg{\cmd}{\st} \whilePARS[r] \mdist}
    \end{array}
    \text{ where }
    \evstep{\cmdtwo}(\conf) \defsym
    \begin{cases}
      \cfg{\cmd;\cmdtwo}{\st} & \text{if $\conf = \cfg{\cmd}{\st}$} \\
      \cfg{\cmdtwo}{\st} & \text{if $\conf = \st \in \Store$} \\
      \aborted & \text{if $\conf = \aborted$.}
    \end{cases}
    \]
  \end{framed}
  \caption{One-step reduction relation as a weighted probabilistic ARS.}
  \label{fig:pwhile}
\end{figure}

We give small step operational semantics for our language
via a weighted probabilistic ARS ${\whilePARS}$ over \emph{configurations}
\[
  \Conf \defsym (\Cmd \times \Store) \cup \Store \cup \{\aborted\}
  \tpkt
\]
Elements $(\cmd,\st) \in \Conf$ are called \emph{active} and denoted by $\cfg{\cmd}{\st}$.
Such an active configuration signals that the command $\cmd$ is to be executed under
the current store $\st$, whereas $\st \in \Conf$ and $\aborted \in \Conf$ indicate that the computation
has halted. The former case gives the final store, whereas the later signals that the command terminated abnormally.
The probabilistic ARS $\whilePARS$ is depicted in Figure~\ref{fig:pwhile}.
The rules of this system reflect the
operational semantics that were informally outlined above.
To avoid syntactic overhead, we identify configurations $\conf$ with dirac multidistribution
$\mset{1 : \conf}$. Thereby, a rule $\conf_1 \whilePARS[w] \mset{1 : \conf_2}$ without probabilistic effect
can be simply denoted by $\conf_1 \whilePARS[w] \conf_2$.

\begin{definition}
  Let $\whileSTEP[]$ be the ARS over $\MDist{\Conf}$ associated with the probabilistic ARS $\whilePARS$ from Figure~\ref{fig:pwhile}.
  The expected cost function $\ec{\cdot} : \Cmd \to \Store \to \Realposinf$
  is defined by
  \[
    \ec{\cmd}(\st) \defsym \sup\{ w \mid \cfg{\cmd}{\st} \whileSEQ[w] \mdist \}
  \]
  The expected value function
  $\ev{\cdot} : \Cmd \to \Store \to (\Store \to \Realposinf) \to \Realposinf$ is given by
  \[
    \ev{\cmd}(\st)(f) \defsym \sup\{ f(\restrict{\mdist}{\Store})   \mid \cfg{\cmd}{\st} \whileSEQ[w] \mdist \}
  \]
\end{definition}

\section{Expectation Transformers}
\cite{KaminskiKMO16} introduce the transformer $\mathsf{ert}[\cmd]$ for
reasoning about the expected runtime of probabilistic while programs.
In this section, we suite this transformer to two transformers
$\ect{\cmd}$ and $\evt{\cmd}$ that compute the expected cost and expected value function
of the program $\cmd$, respectively.
We then prove them sound
with respect to the small step operational semantics introduced in the last section.

Let $\Expect \defsym \Store \to \Realposinf$ be the set of \emph{expectation functions}.
We extend functions $f \ofdom (\Realposinf)^k \to \Realposinf$ pointwise on expectations
and denote these in bold face,
e.g., for each $r \in \Realposinf$ we have a constant function $\mathbf{r}(\st) \defsym r$,
$f \plusf g \defsym \lambda \st. f(\st) + g(\st)$ for $f,g \in \Expect$ etc.
For $\bexp \in \BExp$ we use Iverson's bracket $\brac{\bexp}$ to denote the expectation function
$\brac{\bexp}(\st) \defsym 1$ if $\st \entails \bexp$, and $\brac{\bexp}(\st) \defsym 0$ otherwise.
In particular, $\brac{\btrue}(\st)$ and $\brac{\bfalse}(\st)$ are the constant functions
that evaluate to $1$ and $0$, respectively.
Let $\leqf$ be the point-wise ordering on $\Expect$, i.e.,
\[
  f \leqf g \defiff f(\st) \leq g(\st)
\]
for all $\st \in \Store$. We denote by $\geqf$ the inverse of $\leqf$.
The proof of the following is standard.
\begin{proposition}
  $(\Expect, {\leqf})$ is an $\omega$-CPO, i.e., it is a poset
  in which every $\omega$-chain $f_0 \leqf f_1 \leqf f_2 \leqf \cdots$
  has a supremum in $\Expect$.
  The bottom and top element are $\zerof$ and $\inftyf$, respectively.
  The supremum of an $\omega$-chain $(f_n)_{n \in \N}$ is given point-wise: $\sup_{n \in \N} f_n \defsym \lambda \st. \sup_{n \in \N} f_n(\st)$.
\end{proposition}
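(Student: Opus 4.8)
The plan is to establish each of the three stated properties of $(\Expect, \leqf)$ in turn, since the claim packages together a poset structure, $\omega$-completeness, explicit bottom and top elements, and an explicit formula for suprema of $\omega$-chains. The unifying idea is that all of these properties lift pointwise from the target order on $\Realposinf = [0,\infty]$, which is a complete lattice under its usual ordering. I would therefore first record that $(\Realposinf, {\leq})$ is an $\omega$-CPO (indeed a complete lattice): every monotone sequence of extended nonnegative reals has a supremum in $\Realposinf$, with $0$ and $\infty$ as least and greatest elements. The strategy is then to transport each fact to $\Expect = \Store \to \Realposinf$ through the pointwise definition of $\leqf$.

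First I would verify that $\leqf$ is a partial order. Reflexivity, transitivity, and antisymmetry each reduce, by the definition $f \leqf g \defiff \forall \st.\ f(\st) \leq g(\st)$, to the corresponding property of $\leq$ on $\Realposinf$ holding for every $\st \in \Store$; antisymmetry in particular uses that $f(\st) \leq g(\st)$ and $g(\st) \leq f(\st)$ force $f(\st) = g(\st)$ for all $\st$, hence $f = g$ by function extensionality. Next I would confirm the bottom and top elements: $\zerof \leqf f$ and $f \leqf \inftyf$ hold for all $f$ because $0 \leq f(\st)$ and $f(\st) \leq \infty$ for every $\st$, using that $0$ and $\infty$ bound $\Realposinf$.

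The substantive step is the $\omega$-completeness together with the pointwise supremum formula. Given an $\omega$-chain $f_0 \leqf f_1 \leqf \cdots$ in $\Expect$, I would define the candidate supremum $g \defsym \lambda \st.\ \sup_{n \in \N} f_n(\st)$, which is well defined because for each fixed $\st$ the sequence $(f_n(\st))_{n \in \N}$ is a monotone $\omega$-chain in $\Realposinf$ and hence has a supremum there. I would then check that $g$ is an upper bound, i.e. $f_m \leqf g$ for each $m$, which holds since $f_m(\st) \leq \sup_{n} f_n(\st) = g(\st)$ for every $\st$; and that $g$ is least among upper bounds, since if $h$ satisfies $f_n \leqf h$ for all $n$, then $f_n(\st) \leq h(\st)$ for all $n$ and all $\st$, so taking the supremum over $n$ gives $g(\st) \leq h(\st)$ for every $\st$, i.e. $g \leqf h$. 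This simultaneously proves that the supremum exists in $\Expect$ and that it is computed pointwise, establishing both claims at once.

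I do not expect a genuine obstacle here, which matches the excerpt's remark that the proof is standard; the only point requiring a little care is keeping the quantifier order straight when passing suprema through the pointwise order. Concretely, the least-upper-bound argument interchanges ``for all $\st$, $\sup_n$'' with ``$\sup_n$, for all $\st$,'' and one must justify that the supremum of $(f_n(\st))_n$ in the complete lattice $\Realposinf$ is exactly the quantity bounded above by every competing $h(\st)$. Since each coordinate computation stays inside the fixed complete lattice $\Realposinf$, no interchange-of-limits subtlety arises beyond the elementary fact that a pointwise-defined supremum of a pointwise-monotone family is itself the supremum in the pointwise order.
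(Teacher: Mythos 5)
Your proof is correct and is precisely the standard pointwise-lifting argument the paper alludes to when it says ``the proof of the following is standard'' (the paper itself gives no explicit proof). Verifying the poset axioms, the bounds $\zerof$ and $\inftyf$, and the least-upper-bound property of the pointwise supremum coordinate-wise in the complete lattice $[0,\infty]$ is exactly what is intended, so nothing is missing.
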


\begin{figure}
  \centering
  \begin{framed}
    \vspace{-\baselineskip}
    \begin{align*}
      \et{\skipc}(f)                               & \defsym f \\
      \et[c]{\tickc{r}}(f)                         & \defsym \bracf{c} \mulf \constf{r} \plusf f \\
      \et{\haltc}(f)                               & \defsym \zerof \\
      \et{\var \passign \expd}(f)                  & \defsym \lambda \st. \E{\expd(\st)}(\lambda i. f(\upd{\st}{\var}{i})) \\
      \et{\ifc[\bexptwo]{\bexp}{\cmd}{\cmdtwo}}(f) & \defsym \bracf{\bexptwo \land \bexp} \mulf \et{\cmd}(f) \plusf \bracf{\bexptwo \land \neg\bexp} \mulf \et{\cmdtwo}(f) \\
      \et{\whilec[\bexptwo]{\bexp}{\cmd}}(f)       & \defsym \lfp F. \bracf{\bexptwo \land \bexp} \mulf \et{\cmd}(F) \plusf \bracf{\bexptwo \land \neg\bexp} \mulf f \\
      \et{\ndc{\cmd}{\cmdtwo}}(f)                  & \defsym \maxf(\et{\cmd}(f),\et{\cmdtwo}(f)) \\
      \et{\pc{p}{\cmd}{\cmdtwo}}(f)                & \defsym \constf{p} \mulf \et{\cmd}(f) \plusf (\constf{1-p}) \mulf \et{\cmdtwo}(f) \\
      \et{\cmd; \cmdtwo}(f)                        & \defsym \et{\cmd}(\et{\cmdtwo}(f))
    \end{align*}
  \end{framed}
  \caption{Definition of expectation transformer $\et{\cdot}$.}
  \label{fig:et}
\end{figure}

We arrive at the definition of the \emph{expectation transformers}
$\ect{\cdot}$ and $\evt{\cdot}$, both of type
$\Cmd \to \Expect \to \Expect$.
The definition of $\ect{\cmd}$ and $\evt{\cmd}$
coincide up to the case where $\cmd = \tickc{r}$,
the former taking into account the cost $r$ while the latter is ignoring it.
The definition of $\ect{\cdot}$ and $\evt{\cdot}$ is
given in terms of a transformer $\et[c]{\cdot} \ofdom \Cmd \to \Expect \to \Expect$,
parameterised in a Boolean $c$ that governs the treatment of $\tickc{r}$.
We set
\begin{align*}
  \ect{\cdot} & \ofdom \Cmd \to \Expect \to \Expect
  & \evt{\cdot} & \ofdom \Cmd \to \Expect \to \Expect \\
  \ect{\cmd} & \defsym \et[\btrue]{\cmd}
  & \evt{\cmd} & \defsym \et[\bfalse]{\cmd}
\end{align*}
Informally, the transformer $\et[c]{\cdot}$ is defined in continuation style.
If $f \ofdom \Expect$ gives the cost of executing a program fragment $\cmdtwo$,
then $\ect{\cmd}(f)$ gives the cost of first $\cmd$ and then $\cmdtwo$.
Consequently, the cost of running $\cmd$ is given by $\ect{\cmd}(\zerof)$.
Likewise for $\evt{\cmd}$, where however $f$ denotes the function
applied to final states.
Finally, we note that $\evt{\cmd}$ coincides with the
weakest precondition transformer $\mathsf{wp}[\cmd]$ of \cite{OlmedoKKM16}
on \emph{fully probabilistic programs}, i.e., those without non-deterministic choice.
In contrast to $\evt{\cmd}$, $\mathsf{wp}[\cmd]$ minimises over non-deterministic choice,
i.e., $\mathsf{wp}[\cmd](f)$ gives the minimal expected value of $f$ over
all non-deterministic choices performed by $\cmd$. This is sensible
if $f$ is a predicate, i.e., $f \ofdom \Store \to \{0,1\}$ and thus
$\mathsf{wp}[\cmd](f)$ gives the least probability that $f$ holds
in a terminal state. In contrast, we are interested in the maximal value
that $f$ can take along all non-deterministic choices.

In Figure~\ref{fig:et}, $\lfp F.e$ denotes the least fixed point of the function
$\lambda F.e \ofdom \Expect \to \Expect$.
As the transfomer $\et[c]{\cmd}$ is $\omega$-continuous, $\et[c]{\cmd}$ is well-defined:
\begin{lemma}\label{l:et-continuous}
  For every $\omega$-chain $f_0 \leqf f_1 \leqf f_2 \leqf \cdots$ of expectations,
  \[
    \et[c]{\cmd}(\sup_{n \in \N} f_n) = \sup_{n \in \N} \et[c]{\cmd}(f_n)
    \tpkt
  \]
\end{lemma}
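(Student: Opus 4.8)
The plan is to prove $\omega$-continuity of $\et[c]{\cmd}$ by structural induction on the command $\cmd$, establishing the stronger fact that each $\et[c]{\cmd}$ is an $\omega$-continuous endofunction on the $\omega$-CPO $(\Expect, \leqf)$ from the preceding proposition. Continuity entails monotonicity, so along the way I would freely use that each transformer is monotone, which follows from the same induction.

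First I would handle the base cases. For $\skipc$ the transformer is the identity, and for $\haltc$ it is the constant $\zerof$; both are trivially continuous. For $\tickc{r}$ we have $\et[c]{\tickc{r}}(f) = \bracf{c} \mulf \constf{r} \plusf f$, which is continuous because it only adds a fixed expectation and $\plusf$ distributes over suprema of $\omega$-chains (addition is $\omega$-continuous on $\Realposinf$, computed pointwise by the proposition). For the probabilistic assignment, I would use that $\E{\expd(\st)}(\cdot)$ is a countable nonnegative-weighted sum, and such sums commute with suprema of monotone sequences by the monotone convergence theorem applied pointwise over $\Z$.

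Next the inductive cases. For sequential composition $\et[c]{\cmd;\cmdtwo}(f) = \et[c]{\cmd}(\et[c]{\cmdtwo}(f))$, continuity follows because the composition of two $\omega$-continuous functions is $\omega$-continuous: applying the inner transformer to an $\omega$-chain yields (by IH) an $\omega$-chain whose supremum is preserved, and then the outer transformer preserves that supremum by its own IH. The conditional and probabilistic choice are pointwise $\Realposinf$-linear combinations (scaling by $\bracf{\cdot}$, $\constf{p}$ and $\constf{r}$, and $\plusf$) of the subcommand transformers, each of which is continuous by IH, and these operations preserve suprema of $\omega$-chains. The non-deterministic choice uses $\maxf$; here I would note that binary maximum over a pointwise-ordered $\omega$-CPO commutes with suprema of $\omega$-chains, so $\et[c]{\ndc{\cmd}{\cmdtwo}}$ is continuous given the IHs for $\cmd$ and $\cmdtwo$.

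The main obstacle will be the while loop, $\et[c]{\whilec[\bexptwo]{\bexp}{\cmd}}(f) = \lfp F.\ \Phi_f(F)$ where $\Phi_f(F) \defsym \bracf{\bexptwo \land \bexp} \mulf \et[c]{\cmd}(F) \plusf \bracf{\bexptwo \land \neg\bexp} \mulf f$. Here I must show continuity in the argument $f$, whereas the least fixed point is taken in the variable $F$. The approach is to write the lfp explicitly via Kleene iteration, $\lfp F.\ \Phi_f(F) = \sup_{n} \Phi_f^n(\zerof)$, which is justified because $\Phi_f$ is $\omega$-continuous in $F$ (by the IH for $\cmd$ and the base-case reasoning, since $\Phi_f$ is built from $\et[c]{\cmd}$ and continuity-preserving operations). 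It then suffices to prove by an inner induction on $n$ that $\lambda f.\ \Phi_f^n(\zerof)$ is $\omega$-continuous in $f$, and finally to exchange the two suprema. The supremum exchange is the delicate point: given an $\omega$-chain $(f_k)_k$, I must argue
\[
  \sup_n \Phi_{\sup_k f_k}^n(\zerof) = \sup_k \sup_n \Phi_{f_k}^n(\zerof)\tpkt
\]
By the inner induction the left side equals $\sup_n \sup_k \Phi_{f_k}^n(\zerof)$, and since both $(\Phi_{f_k}^n(\zerof))_n$ and $(\Phi_{f_k}^n(\zerof))_k$ form $\omega$-chains (monotonicity of $\Phi$ in $F$ and in $f$), the two suprema commute by the standard fact that for a doubly-monotone family the iterated suprema agree with the supremum over the diagonal. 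Assembling these exchanges closes the while case and completes the induction.
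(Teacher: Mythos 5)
Your proof is correct and follows exactly the approach the paper takes, namely a structural induction on $\cmd$ (the paper simply states ``by a standard induction on the structure of $\cmd$'' without giving details). Your treatment of the delicate while-loop case --- Kleene iteration plus exchange of suprema over a doubly-monotone family --- is precisely the content hidden in the word ``standard,'' and your handling of $\maxf$ restricted to $\omega$-chains is the right care point.
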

\begin{proof}
  By a standard induction on the structure of $\cmd$.
\end{proof}

As every continuous function is also monotone, we get:
\begin{lemma}\label{l:et-monotone}
  \[
    f \leqf g \IImp \et[c]{\cmd}(f) \leqf \et[c]{\cmd}(g)
    \tpkt
  \]
\end{lemma}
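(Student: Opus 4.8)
The plan is to obtain monotonicity as an immediate corollary of the $\omega$-continuity already established in Lemma~\ref{l:et-continuous}, invoking the standard fact that a continuous map on an $\omega$-CPO is monotone. I would not redo a structural induction on $\cmd$, since continuity already packages exactly what is needed.

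First I would turn the hypothesis $f \leqf g$ into an $\omega$-chain by setting $f_0 \defsym f$ and $f_n \defsym g$ for every $n \geq 1$. By assumption $f \leqf g$, and $g \leqf g$ holds by reflexivity, so $f_0 \leqf f_1 \leqf f_2 \leqf \cdots$ is a legitimate $\omega$-chain in $\Expect$ and Lemma~\ref{l:et-continuous} applies to it. Since $f_1 = g$ is already an upper bound of the chain, its supremum is $\sup_{n \in \N} f_n = g$.

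Next I would apply continuity to this chain, obtaining
\[
  \et[c]{\cmd}(g) = \et[c]{\cmd}(\sup_{n \in \N} f_n) = \sup_{n \in \N} \et[c]{\cmd}(f_n)
  \tpkt
\]
The term $\et[c]{\cmd}(f) = \et[c]{\cmd}(f_0)$ occurs among those over which the right-hand supremum is formed, so it is dominated by that supremum, which gives $\et[c]{\cmd}(f) \leqf \et[c]{\cmd}(g)$, as desired.

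I expect no real obstacle here. The only point requiring a little care is that Lemma~\ref{l:et-continuous} is phrased purely as preservation of suprema of $\omega$-chains, so I must check that the two-valued sequence is genuinely an admissible $\omega$-chain---which holds precisely because $f \leqf g$---and that the supremum on the right is well-defined in $\Expect$. The latter is immediate from the preceding proposition: suprema in $\Expect$ are computed pointwise in $\Realposinf$, hence exist for arbitrary subsets, in particular for $\set{\et[c]{\cmd}(f_n) \mid n \in \N}$.
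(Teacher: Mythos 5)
Your proof is correct and matches the paper's approach exactly: the paper derives this lemma solely from the remark that every ($\omega$-)continuous function is monotone, i.e.\ from Lemma~\ref{l:et-continuous}, which is precisely what you do. Your write-up merely spells out the standard two-valued-chain argument ($f \leqf g \leqf g \leqf \cdots$) that underlies that remark, and it is sound.
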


Towards our soundness results, we show that $\et[c]{\cmd}$ decreases along reductions
starting from $\cfg{\cmd}{\st}$,
taking into account the cost of steps when $c = \btrue$.
To formalise this, we first extend
the expectation transformer $\et{\cdot} \ofdom \Expect \to \Expect$ to a function $\e[c] \ofdom \Expect \to (\Conf \to \Realposinf)$ as follows:
\begin{align*}
  \e[c](f)(\cfg{\cmd}{\st}) & \defsym \et[c]{\cmd}(f)(\st)
  & \e[c](f)(\st) & \defsym f(\st)
  & \e[c](f)(\aborted) & \defsym 0 \tpkt
\end{align*}

\begin{lemma}\label{l:e-decrease-pars}
  \[
    \cfg{\cmd}{\st} \whilePARS[w] \mdist \IImp \e[c](f)(\cfg{\cmd}{\st}) \geq \brac{c} \mulf w + \E{\mdist}(\e[c](f))
    \tpkt
  \]
\end{lemma}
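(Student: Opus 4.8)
The plan is to argue by induction on the derivation of the one-step reduction $\cfg{\cmd}{\st} \whilePARS[w] \mdist$. Since every rule of Figure~\ref{fig:pwhile} is an axiom with the sole exception of \textsc{Compose}, this is in effect a structural induction on $\cmd$; I keep $f$ universally quantified so that the induction hypothesis may be re-instantiated at a modified continuation in the \textsc{Compose} step. In each case I unfold the left-hand side through $\e[c](f)(\cfg{\cmd}{\st}) = \et[c]{\cmd}(f)(\st)$ and expand $\et[c]{\cmd}$ by the matching clause of Figure~\ref{fig:et}, while the right-hand side is computed from the explicit shape of $\mdist$ dictated by the applicable rule.

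All cases except \textsc{Compose} reduce to a direct computation, in fact to an equality, with the two choice rules the only source of a genuine inequality. For the axioms that reduce to a store or to $\aborted$ --- \textsc{Skip}, \textsc{Halt}, \textsc{WhileFin}, \textsc{WhileFail}, \textsc{IfFail} --- the right-hand side collapses to $f(\st)$ or to $0$ and coincides with $\et[c]{\cmd}(f)(\st)$ after expansion; here $w = 0$, so the $\brac{c}\cdot w$ summand vanishes. \textsc{Tick} is the only axiom with $w \neq 0$, and the required summand $\brac{c}\cdot r$ is supplied exactly by the $\bracf{c}\mulf\constf{r}$ term in the definition of $\et[c]{\tickc{r}}$. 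For \textsc{IfTrue}, \textsc{IfFalse} and \textsc{WhileStep} the entailment side condition fixes which Iverson bracket in the relevant clause equals $1$; in \textsc{WhileStep} I unfold the least fixed point once, using that $\et[c]{\whilec[\bexptwo]{\bexp}{\cmd}}(f)$ is a fixed point of its defining functional, to rewrite it as $\et[c]{\cmd}(\et[c]{\whilec[\bexptwo]{\bexp}{\cmd}}(f))(\st)$, which by the sequential-composition clause is $\e[c](f)$ evaluated at the successor configuration $\cfg{\cmd;\whilec[\bexptwo]{\bexp}{\cmd}}{\st}$. Finally, \textsc{ChoiceL} and \textsc{ChoiceR} give the inequality because $\et[c]{\ndc{\cmd}{\cmdtwo}}(f)$ is a pointwise maximum and hence dominates each of its arguments, while \textsc{ProbChoice} gives equality since both sides are the same convex combination, using $\E{\cdot}$ on the two-point multidistribution.

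The one case that uses the induction hypothesis is \textsc{Compose}, where the redex is $\cmd;\cmdtwo$, the premise is $\cfg{\cmd}{\st} \whilePARS[w] \nu$, and $\mdist = \fmap{\evstep{\cmdtwo}}(\nu)$. Setting $g \defsym \et[c]{\cmdtwo}(f)$, the clause $\et[c]{\cmd;\cmdtwo}(f) = \et[c]{\cmd}(\et[c]{\cmdtwo}(f))$ rewrites the left-hand side as $\e[c](g)(\cfg{\cmd}{\st})$, and the induction hypothesis instantiated at $g$ bounds this below by $\brac{c}\cdot w + \E{\nu}(\e[c](g))$. It then remains to identify $\E{\nu}(\e[c](g))$ with $\E{\mdist}(\e[c](f))$. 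The key is the pointwise factoring identity $\e[c](\et[c]{\cmdtwo}(f))(\conf) = \e[c](f)(\evstep{\cmdtwo}(\conf))$, which I verify by the three-way case split on $\conf$ implicit in the definitions of $\e[c]$ and $\evstep{\cmdtwo}$: an active configuration $\cfg{\cmd}{\st}$ maps to $\cfg{\cmd;\cmdtwo}{\st}$ and both sides read $\et[c]{\cmd}(\et[c]{\cmdtwo}(f))(\st)$; a store $\st$ maps to $\cfg{\cmdtwo}{\st}$ with both sides equal to $\et[c]{\cmdtwo}(f)(\st)$; and $\aborted$ stays $\aborted$ with both sides $0$. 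Combined with the pushforward identity $\E{\fmap{h}(\nu)}(\e[c](f)) = \E{\nu}(\e[c](f)\circ h)$, immediate from the definitions of $\fmap{\cdot}$ and $\E{\cdot}$, this yields $\E{\nu}(\e[c](g)) = \E{\mdist}(\e[c](f))$ and hence the claim.

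I expect the \textsc{Compose} case, and in particular the factoring identity, to be the main obstacle: it is the only point where the continuation-passing shape of $\et[c]{\cdot}$ has to be reconciled with the syntactic bookkeeping that $\evstep{\cmdtwo}$ performs in the operational rule, whereas every remaining case is a single-line expansion of the appropriate clause of Figure~\ref{fig:et}. A small but essential point is that the induction hypothesis is applied at the continuation $g = \et[c]{\cmdtwo}(f)$ rather than at $f$ itself, which is precisely why the statement must be read with $f$ universally quantified.
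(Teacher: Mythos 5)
Your proof is correct and takes essentially the same route as the paper's: induction on the derivation of $\whilePARS$, direct expansion of $\et[c]{\cdot}$ in every axiom case, and, for \textsc{Compose}, the induction hypothesis applied at the continuation $\et[c]{\cmdtwo}(f)$ together with your ``factoring identity'' $\e[c](\et[c]{\cmdtwo}(f)) = \e[c](f) \compose \evstep{\cmdtwo}$, which is literally the paper's equation~\eqref{l:e-decrease-pars:1}, established by the same three-way case split on configurations. One minor remark: your explicit case enumeration skips the probabilistic-assignment axiom, but that case is exactly the kind of direct equality your blanket claim covers ($\et[c]{\var \passign \expd}(f)(\st)$ equals the expectation of $f$ over the resulting multidistribution of updated stores), so nothing substantive is missing.
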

\begin{proof}
  \newcommand{\ei}{\e[c](f)}
  The proof is by induction on the definition of the probabilistic ARS $\whilePARS$.
  The case of a non-probabilistic transition $\cfg{\cmd}{\st} \whilePARS[w] \conf$
  amounts to showing $\ei(\cfg{\cmd}{\st}) \geq \brac{c} \mulf w + \ei(\conf)$.
  \begin{proofcases}
    \case{$\cfg{\skipc}{\st} \whilePARS[0] \st$}
    Then $\ei(\cfg{\skipc}{\st}) = f(\st) = \brac{c} \cdot 0 + \ei(\st)$.

    \case{$\cfg{\tickc{r}}{\st} \whilePARS[r] \st$}
    Then $\ei(\cfg{\tickc{r}}{\st}) = f(\st) = \brac{c} \cdot r + \ei(\st)$.

    \case{$\cfg{\haltc}{\st} \whilePARS[0] \aborted$}
    Then $\ei(\cfg{\haltc}{\st}) = 0 = \brac{c} \cdot 0 + \ei(\aborted)$.

    \case{$\cfg{\var \passign \expd}{\st} \whilePARS[0] \mdist$ where $\mdist = \prms{ d(\st)(i) : \upd{\st}{\var}{i} \mid i \in \Z, d(\st)(i) > 0}$}
    Then
    \begin{align*}
      \ei(\cfg{\var \passign \expd}{\st})
      & = \E{\expd(\st)}(\lambda i. f(\upd{\st}{\var}{i})) \\
      & = \E{\mdist}(f) \\
      & \mathpar{$\ei(\st') = f(\st')$ for all $\st' \in \Store$} \\
      & = \brac{c} \cdot 0 + \E{\mdist}(\ei)
      \tpkt
    \end{align*}

    \case{$\cfg{\ifc[\bexptwo]{\bexp}{\cmd}{\cmdtwo}}{\st} \whilePARS[0] \cfg{\cmd}{\st}$ where $\st \entails \bexptwo \land \bexp$}
    Then
    \begin{align*}
      \hangeq{\ei(\cfg{\ifc[\bexptwo]{\bexp}{\cmd}{\cmdtwo}}{\st})}
      & = \brac{\bexptwo \land \bexp}(\st) \cdot \et{\cmd}(f)(\st) + \brac{\bexptwo \land \neg\bexp}(\st) \cdot \et{\cmdtwo}(f)(\st) \\
      & \mathpar{$\st \entails \bexptwo \land \bexp$, hence $\brac{\bexptwo \land \bexp}(\st) = 1$ and $\brac{\bexptwo \land \neg\bexp}(\st) = 0$} \\
      & = \et{\cmd}(f)(\st) = \brac{c} \cdot 0 + \et{\cmd}(f)(\st) \tpkt
    \end{align*}

    \case{$\cfg{\ifc[\bexptwo]{\bexp}{\cmd}{\cmdtwo}}{\st} \whilePARS[0] \cfg{\cmdtwo}{\st}$ where $\st \entails \bexptwo \land \neg\bexp$}
    This case follows as above, using that $\brac{\bexptwo \land \bexp}(\st) = 0$ and $\brac{\bexptwo \land \neg\bexp}(\st) = 1$
    holds since $\st \entails \bexptwo \land \neg\bexp$.

    \case{$\cfg{\ifc[\bexptwo]{\bexp}{\cmd}{\cmdtwo}}{\st} \whilePARS[0] \aborted$ where $\st \entails \neg\bexptwo$}
    The assumption yields $\brac{\bexptwo \land \bexp}(\st) = 0$ and $\brac{\bexptwo \land \neg\bexp}(\st) = 0$, and hence
    \[
      \ei(\cfg{\ifc[\bexptwo]{\bexp}{\cmd}{\cmdtwo}}{\st}) = 0 = \brac{c} \cdot 0 + \ei(\aborted) \tpkt
    \]
    \case{$\cfg{\whilec[\bexptwo]{\bexp}{\cmd}}{\st} \whilePARS[0] \cfg{\cmd; \whilec[\bexptwo]{\bexp}{\cmd}}{\st}$ where $\st \entails \bexptwo \land \bexp$}
    Define
    \[
      F_f(g)(\st') \defsym \brac{\bexptwo \land \bexp}(\st') \cdot \et[c]{\cmd}(g)(\st') + \brac{\bexptwo \land \neg\bexp}(\st') \cdot f(\st')
      \tkom
    \]
    thus
    $\et[c]{\whilec[\bexptwo]{\bexp}{\cmd}}(f) = \lfp F_f = F_f (\lfp F_f)$. Thus
    \begin{align*}
      \hangeq{\ei(\cfg{\whilec[\bexptwo]{\bexp}{\cmd}}{\st})}
        & = F_f (\lfp F_f) (\st) \\
        & = \brac{\bexptwo \land \bexp}(\st) \cdot \et[c]{\cmd}(\lfp F_f)(\st) + \brac{\bexptwo \land \neg\bexp}(\st) \cdot f(\st) \\
        & \mathpar{$\st \entails \bexptwo \land \bexp$, hence $\brac{\bexptwo \land \bexp}(\st) = 1$ and $\brac{\bexptwo \land \neg\bexp}(\st) = 0$ } \\
        & = \et[c]{\cmd}(\lfp F_f)(\st) \\
        & = \et[c]{\cmd}(\et[c]{\whilec[\bexptwo]{\bexp}{\cmd}}(f))(\st) \\
        & = \et[c]{\cmd;\whilec[\bexptwo]{\bexp}{\cmd}}(\st) \\
        & = \brac{c} \cdot 0 + \ei(\cfg{\cmd;\whilec[\bexptwo]{\bexp}{\cmd}}{\st})
        \tpkt
    \end{align*}
    \case{$\cfg{\whilec[\bexptwo]{\bexp}{\cmd}}{\st} \whilePARS[0] \st$ where $\st \entails \bexptwo \land \neg\bexp$}
    Reasoning as above, using
    $\brac{\bexptwo \land \bexp}(\st) = 0$ and $\brac{\bexptwo \land \neg\bexp}(\st) = 1$
    we have
    \[
      \ei(\cfg{\whilec[\bexptwo]{\bexp}{\cmd}}{\st}) = f(\st) = \brac{c} \cdot 0 + \ei(\st)
      \tpkt
    \]
    \case{$\cfg{\whilec[\bexptwo]{\bexp}{\cmd}}{\st} \whilePARS[0] \aborted$ where $\st \entails \neg\bexptwo$}
    As above, using
    Reasoning as above, using
    $\brac{\bexptwo \land \bexp}(\st) = 0$ and $\brac{\bexptwo \land \neg\bexp}(\st) = 0$
    we get
    \[
      \ei(\cfg{\whilec[\bexptwo]{\bexp}{\cmd}}{\st}) = 0 = \brac{c} \cdot 0 + \ei(\aborted)
      \tpkt
    \]

    \case{$\cfg{\ndc{\cmd}{\cmdtwo}}{\st} \whilePARS[0] \cfg{\cmd}{\st}$}
    Then
    \begin{align*}
      \ei(\cfg{\ndc{\cmd}{\cmdtwo}}{\st})
      & = \max(\et[c]{\cmd}(f)(\st),\et[c]{\cmdtwo}(f)(\st)) \\
      & \geq \et[c]{\cmd}(f)(\st) = \brac{c} \cdot 0 + \ei(\cfg{\cmd}{\st})
      \tpkt
    \end{align*}
    \case{$\cfg{\ndc{\cmd}{\cmdtwo}}{\st} \whilePARS[0] \cfg{\cmdtwo}{\st}$}
    This case follows as the previous one.
    \case{$\cfg{\pc{p}{\cmd}{\cmdtwo}}{\st} \whilePARS[0] \mdist$ where $\mdist = \prms{ p : \cfg{\cmd}{\st}, 1-p : \cfg{\cmdtwo}{\st}}$}
    Then
    \begin{align*}
      \ei(\cfg{\pc{p}{\cmd}{\cmdtwo}}{\st})
      & = p \cdot \et[c]{\cmd}(f)(\st) + (1-p) \cdot \et[c]{\cmdtwo}(f)(\st) \\
      & = p \cdot \ei(\cfg{\cmd}{\st}) + (1-p) \cdot \ei(\cfg{\cmdtwo}{\st}) \\
      & = \E{\mdist}(\ei)
        \tpkt
    \end{align*}
    \case{$\cfg{\cmd;\cmdtwo}{\st} \whilePARS[r] \fmap{\evstep{\cmdtwo}}(\mdist)$ where $\cfg{\cmd}{\st} \whilePARS[r] \mdist$}
    We first show
    \begin{equation}
      \label{l:e-decrease-pars:1}
      \e[c](\et[c]{\cmdtwo}(f)) = \e[c](f) \compose \evstep{\cmdtwo}
      \tkom
    \end{equation}
    by case analysis:
    \begin{align*}
      \e[c](\et[c]{\cmdtwo}(f))(\cfg{\cmd'}{\st'})
      & = \et[c]{\cmd'}(\et[c]{\cmdtwo}(f))(\st') \\
      & = \et[c]{\cmd';\cmdtwo}(f)(\st') \\
      & = \e[c](f)(\cfg{\cmd';\cmdtwo}{\st'})
        = \e[c](f)(\evstep{\cmdtwo}(\cfg{\cmd'}{\st'}))
        \tspkt \\
      \e[c](\et[c]{\cmdtwo}(f))(\st')
      & = \et[c]{\cmdtwo}(f)(\st') \\
      & = \e[c](f)(\cfg{\cmdtwo}{\st'})
        = \e[c](f)(\evstep{\cmdtwo}(\st'))
        \tspkt \\
      \e[c](\et[c]{\cmdtwo}(f))(\aborted)
      & = 0
        = \e[c](f)(\aborted)
        = \e[c](f)(\evstep{\cmdtwo}(\aborted))
        \tpkt
    \end{align*}
    Consequently,
    \begin{align*}
      \ei(\cfg{\cmd;\cmdtwo}{\st})
      & = \et[c]{\cmd}(\et[c]{\cmdtwo}(f))(\st) \\
      & = \e[c](\et[c]{\cmdtwo}(f))(\cfg{\cmd}{\st}) \\
      & \mathpar{induction hypothesis}\\
      & \geq \brac{c} \cdot r + \E{\mdist}(\e[c](\et[c]{\cmdtwo}(f))) \\
      & \mathpar{Equation \eqref{l:e-decrease-pars:1}} \\
      & = \brac{c} \cdot r + \E{\mdist}(\e[c](f) \compose \evstep{\cmdtwo}) \\
      & = \brac{c} \cdot r + \E{\fmap{\evstep{\cmdtwo}}(\mdist)}(\e[c](f))
        \tpkt
        \qedhere
    \end{align*}
  \end{proofcases}
\end{proof}

\begin{lemma}\label{l:e-decrease-step}
  \[
    \mdist \whileSTEP[w] \mdisttwo
    \IImp \E{\mdist}(\e[c](f)) \geq \brac{c} \cdot w + \E{\mdisttwo}(\e[c](f))
    \tpkt
  \]
\end{lemma}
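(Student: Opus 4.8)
The plan is to prove the claim by induction on the derivation of $\mdist \whileSTEP[w] \mdisttwo$, following the three inference rules that define the induced weighted ARS $\whileSTEP$ over $\MDist{\Conf}$. The point is that all the genuine reasoning about commands and stores has already been carried out in Lemma~\ref{l:e-decrease-pars}; the present lemma merely lifts that pointwise estimate through the multidistribution structure, so that $\mdist \mapsto \E{\mdist}(\e[c](f))$ behaves like a cost-decreasing potential along $\whileSTEP$.

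For the reflexivity rule $\mdist \whileSTEP[0] \mdisttwo$, where $\mdisttwo = \mdist$ and $w = 0$, the desired inequality reads $\E{\mdist}(\e[c](f)) \geq \brac{c} \cdot 0 + \E{\mdist}(\e[c](f))$, which holds with equality. For the base rule $\mset{1 : a} \whileSTEP[w] \mdisttwo$ arising from a one-step transition $a \whilePARS[w] \mdisttwo$, I first observe that every rule of $\whilePARS$ in Figure~\ref{fig:pwhile} reduces an \emph{active} configuration, so that necessarily $a = \cfg{\cmd}{\st}$ for some command $\cmd$ and store $\st$. Then $\E{\mset{1 : a}}(\e[c](f)) = \e[c](f)(\cfg{\cmd}{\st})$, and Lemma~\ref{l:e-decrease-pars} applied to $\cfg{\cmd}{\st} \whilePARS[w] \mdisttwo$ yields precisely $\e[c](f)(\cfg{\cmd}{\st}) \geq \brac{c} \cdot w + \E{\mdisttwo}(\e[c](f))$, as required.

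The only case that needs a small computation is the convex-combination rule, where $\mdist = \biguplus_{i \in I} p_i \cdot \mu_i$ and $\mdisttwo = \biguplus_{i \in I} p_i \cdot \nu_i$ with $\mu_i \whileSTEP[w_i] \nu_i$ for every $i \in I$ and $w = \sum_{i \in I} p_i \cdot w_i$. Here I would apply the induction hypothesis to each premise to obtain $\E{\mu_i}(\e[c](f)) \geq \brac{c} \cdot w_i + \E{\nu_i}(\e[c](f))$, and then invoke the linearity of expectation over convex multiset unions noted in the preliminaries, namely $\E{\biguplus_{i \in I} p_i \cdot \rho_i}(g) = \sum_{i \in I} p_i \cdot \E{\rho_i}(g)$, on both $\mdist$ and $\mdisttwo$. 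Summing the per-$i$ bounds with weights $p_i$ and factoring the scalar $\brac{c}$ out of $\sum_{i \in I} p_i \cdot w_i = w$ recombines the right-hand side into $\brac{c} \cdot w + \E{\mdisttwo}(\e[c](f))$.

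The main, and essentially only, obstacle is the bookkeeping in this last case: the index set $I$ may be countably infinite, so I would justify the termwise summation and rearrangement of the series by working in $\Realposinf$, where all sums are unconditionally defined, nonnegative, and monotone; this makes summing the family of inequalities termwise and pulling out the constant $\brac{c}$ legitimate without any convergence side conditions. Everything else follows immediately from the shape of the induced weighted ARS and from Lemma~\ref{l:e-decrease-pars}.
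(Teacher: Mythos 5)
Your proposal is correct and follows essentially the same route as the paper: induction on the three inference rules defining $\whileSTEP$, with the reflexivity case trivial, the singleton case reduced to Lemma~\ref{l:e-decrease-pars} after observing that the source must be an active configuration, and the convex-union case handled by applying the induction hypothesis termwise and using linearity of expectation over convex multiset unions. Your added care about countable index sets and working in $\Realposinf$ is a sound (and slightly more explicit) justification of the summation steps the paper performs silently.
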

\begin{proof}
  The proof is by induction on the definition of $\whileSTEP[w]$:
  \begin{proofcases}
    \case{$\mdist \whileSTEP[0] \mdist$}
    This case trivially holds.

    \case{$\mdist = \mset{1:\conf} \whileSTEP[w] \mdisttwo$ where $\conf \whileSTEP[w] \mdisttwo$}
    Then $\conf = \cfg{\cmd}{\st}$ for some command $\cmd$ and store $\st$.
    We conclude this case with Lemma~\ref{l:e-decrease-pars}.

    \case{$\mdist = \biguplus_{i\in I} p_i \cdot \mdist_i \whileSTEP[\sum_{i \in I} p_i \cdot w_i] \biguplus_{i\in I} p_i \cdot \mdisttwo_i = \mdisttwo$
      where $\mdist_i \pTO[w_i] \mdisttwo_i$ for all $i \in I$}
    Then
    \begin{align*}
      \E{\mdist}(\e[c](f))
      & = \sum_{i \in I} p_i \cdot \E{\mdist_i}(\e[c](f)) \\
      & \mathpar{induction hypothesis} \\
      & \geq \sum_{i \in I} p_i \cdot (\brac{c} \cdot w_i + \E{\mdisttwo_i}(\e[c](f))) \\
      & = \sum_{i \in I} p_i \cdot \brac{c} \cdot w_i + \sum_{i \in I} p_i \cdot \E{\mdisttwo_i}(\e[c](f)) \\
      & \mathpar{$\mdisttwo = \biguplus_{i \in I} p_i \cdot \mdisttwo_i$}\\
      & = \brac{c} \cdot \sum_{i \in I} p_i \cdot w_i + \E{\mdisttwo}(\e[c](f))
        \tpkt \qedhere
    \end{align*}
  \end{proofcases}
\end{proof}

\begin{theorem}\label{t:e-decrease-seq}
  \[
    \mdist \whileSEQ[w] \mdisttwo
    \IImp \E{\mdist}(\e[c](f)) \geq \brac{c} \cdot w + \E{\mdisttwo}(\e[c](f)) \tpkt
  \]
\end{theorem}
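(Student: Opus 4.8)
The plan is to lift the single-step decrease of Lemma~\ref{l:e-decrease-step} to arbitrary reductions by induction on the derivation of $\mdist \whileSEQ[w] \mdisttwo$. Recall that $\whileSEQ[w]$ is the reduction at cost $w$ induced by the weighted ARS $\whileSTEP$, generated by the three inference rules defining $\to^c$: reflexivity at cost $0$, the embedding of a single step $\mdist \whileSTEP[w] \mdisttwo$, and the composition rule that concatenates two reductions while adding their costs. I would carry out the induction along exactly these three cases, so that all the operational content is already encapsulated in Lemma~\ref{l:e-decrease-step} and only the transitive closure remains to be handled.

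For the reflexivity rule we have $\mdist \whileSEQ[0] \mdist$ with $w = 0$, and the claim reduces to $\E{\mdist}(\e[c](f)) \geq \brac{c} \cdot 0 + \E{\mdist}(\e[c](f))$, which holds with equality. For the embedding rule, $\mdist \whileSEQ[w] \mdisttwo$ stems from a single step $\mdist \whileSTEP[w] \mdisttwo$, and the claim is then precisely the statement of Lemma~\ref{l:e-decrease-step}, so nothing further is needed.

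The only genuinely inductive case is the composition rule, where $\mdist \whileSEQ[w] \mdisttwo$ is obtained from $\mdist \whileSEQ[w_1] \mdist'$ and $\mdist' \whileSEQ[w_2] \mdisttwo$ with $w = w_1 + w_2$. Applying the induction hypothesis to both sub-reductions yields $\E{\mdist}(\e[c](f)) \geq \brac{c} \cdot w_1 + \E{\mdist'}(\e[c](f))$ and $\E{\mdist'}(\e[c](f)) \geq \brac{c} \cdot w_2 + \E{\mdisttwo}(\e[c](f))$. Substituting the second inequality into the first and using monotonicity of addition then gives $\E{\mdist}(\e[c](f)) \geq \brac{c} \cdot w_1 + \brac{c} \cdot w_2 + \E{\mdisttwo}(\e[c](f)) = \brac{c} \cdot (w_1 + w_2) + \E{\mdisttwo}(\e[c](f))$, which is the desired bound.

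I do not expect any real obstacle here: the argument is the standard lifting of a one-step potential decrease to its transitive closure. The only points that need care are that all quantities range over $\Realposinf$, so that chaining the inequalities relies on addition being monotone and associative in the extended non-negative reals, and that $\brac{c}$ is the scalar constant $1$ or $0$, so that the individual cost contributions $\brac{c} \cdot w_i$ combine additively into $\brac{c} \cdot w$. None of the operational details of $\whilePARS$ re-enter at this stage, since they were already discharged in Lemmas~\ref{l:e-decrease-pars} and~\ref{l:e-decrease-step}.
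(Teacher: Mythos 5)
Your proof is correct and follows exactly the paper's own argument: induction on the derivation of $\mdist \whileSEQ[w] \mdisttwo$ with the reflexivity case trivial, the single-step case discharged by Lemma~\ref{l:e-decrease-step}, and the composition case by chaining the two induction hypotheses. The paper states the composition case as a ``direct consequence of the induction hypothesis''; your explicit treatment of monotonicity of addition over $\Realposinf$ merely spells out that step.
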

\begin{proof}
  The proof is by induction on the definition of $\whileSTEP[w]$:
  \begin{proofcases}
    \case{$\mdist \whileSEQ[0] \mdist$}
    This case trivially holds.

    \case{$\mdist \whileSEQ[w] \mdisttwo$ where $\mdist \whileSTEP[w] \mdisttwo$}
    This case follows from Lemma~\ref{l:e-decrease-step}.

    \case{$\mdist \whileSEQ[w_1 + w_2] \mdisttwo$ where $\mdist \whileSEQ[w_1] \mdist'$ and $\mdist' \whileSEQ[w_2] \mdisttwo$}
    This case is a direct consequence of the induction hypothesis. \qedhere
  \end{proofcases}
\end{proof}
\begin{corollary}[Soundness of Expectation Transformers]\label{c:et:sound}
  \begin{enumerateenv}
  \item $\ec{\cmd}(\st) \leq \ect{\cmd}(\zerof)(\st)$.
  \item $\ec{\cmd}(\st)(f) \leq \evt{\cmd}(f)(\st)$.
  \end{enumerateenv}
\end{corollary}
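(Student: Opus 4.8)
The plan is to read both inequalities off Theorem~\ref{t:e-decrease-seq}, choosing the cost flag $c$ and the test expectation so as to match $\ect{\cdot} = \et[\btrue]{\cdot}$ and $\evt{\cdot} = \et[\bfalse]{\cdot}$, and then to discard a non-negative residual term. Concretely, I would fix an arbitrary reduction $\cfg{\cmd}{\st} \whileSEQ[w] \mdist$ and, identifying $\cfg{\cmd}{\st}$ with the Dirac multidistribution $\mset{1 : \cfg{\cmd}{\st}}$, instantiate the theorem. Since $\E{\mset{1:\cfg{\cmd}{\st}}}(\e[c](f)) = \e[c](f)(\cfg{\cmd}{\st}) = \et[c]{\cmd}(f)(\st)$ by the definition of $\e[c]$ on active configurations, this yields
\[
  \et[c]{\cmd}(f)(\st) \geq \brac{c} \cdot w + \E{\mdist}(\e[c](f))
\]
for every expectation $f$ and every Boolean $c$, the single inequality driving both parts.

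For part~(1) I would set $c = \btrue$ and $f = \zerof$, so that $\brac{\btrue} = 1$ and the bound reads $\ect{\cmd}(\zerof)(\st) \geq w + \E{\mdist}(\e[\btrue](\zerof))$. Each of the three defining cases of $\e[\btrue](\zerof)$ returns a value in $\Realposinf$, so the expectation $\E{\mdist}(\e[\btrue](\zerof))$ is non-negative and may be dropped, leaving $w \leq \ect{\cmd}(\zerof)(\st)$. As this upper bound is independent of the chosen reduction, taking the supremum over all $w$ gives $\ec{\cmd}(\st) \leq \ect{\cmd}(\zerof)(\st)$.

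For part~(2) I would instead set $c = \bfalse$, so that the cost term vanishes and only $\evt{\cmd}(f)(\st) \geq \E{\mdist}(\e[\bfalse](f))$ survives. The key step is to bound this right-hand side below by $\E{\restrict{\mdist}{\Store}}(f)$: splitting $\E{\mdist}(\e[\bfalse](f))$ according to whether a configuration is active, a store, or $\aborted$, the store summands contribute exactly $\E{\restrict{\mdist}{\Store}}(f)$ since $\e[\bfalse](f)(\st') = f(\st')$, the $\aborted$ summand contributes $0$, and the active summands $\et[\bfalse]{\cmd'}(f)(\st')$ are non-negative and are discarded. Taking the supremum over all reductions then gives $\ev{\cmd}(\st)(f) \leq \evt{\cmd}(f)(\st)$.

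The whole argument is routine once Theorem~\ref{t:e-decrease-seq} is in hand; the one point I would flag is the \emph{direction} of the estimate in part~(2). Because we seek an upper bound on the expected value, the non-negative active-configuration mass must be thrown away rather than retained — the opposite of what a lower bound would require — and this is legitimate precisely because $\et[\bfalse]{\cmd'}(f) \geqf \zerof$.
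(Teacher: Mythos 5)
Your proof is correct and is essentially the paper's own argument: the corollary is stated without explicit proof precisely because it is this routine instantiation of Theorem~\ref{t:e-decrease-seq} at the Dirac multidistribution $\mset{1:\cfg{\cmd}{\st}}$, taking $c = \btrue$, $f = \zerof$ for part~(1) and $c = \bfalse$ for part~(2), then dropping the non-negative residual expectations and passing to the supremum over reductions. You also handled the statement's typo correctly, reading item~(2) as a bound on the expected value function $\ev{\cmd}(\st)(f)$ via $\E{\restrict{\mdist}{\Store}}(f) \leq \E{\mdist}(\e[\bfalse](f))$, which is exactly the intended reading.
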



\section{Binding Expected Costs}

By Corollary~\ref{c:et:sound}, the expected cost of running $\cmd$
is given by $\ect{\cmd}(\zerof)$.
When $\cmd$ does not contain loops, the latter is easily computable.
To treat loops, \cite{KaminskiKMO16} propose to search for \emph{upper invariants}:
\begin{definition}
  A function $I_f \ofdom \Expect$ is an upper invariant for a loop
  $\whilec[\bexptwo]{\bexp}{\cmd}$ with respect to $f \in \Expect$ if
  \[
    \bracf{\bexptwo \land \bexp} \mulf \et{\cmd}(I_f) \plusf \bracf{\bexptwo \land \neg\bexp} \mulf f \leqf I_f
    \tpkt
  \]
\end{definition}

The following is an application of Park's Theorem with Lemma~\ref{l:et-continuous},
stating that for continuous $F$, $F(I) \leqf I$ implies $\lfp F \leqf I$.
\begin{proposition}[see \cite{KaminskiKMO16}]\label{p:upper-invariant}
  If $I$ is an upper invariant for $\whilec[\bexptwo]{\bexp}{\cmd}$ with respect
  to $f$ then
  \[
    \et[c]{\whilec[\bexptwo]{\bexp}{\cmd}}(f) \leqf I_f
    \tpkt
  \]
\end{proposition}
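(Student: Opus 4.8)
The plan is to recognise the claim as an instance of Park's induction principle for least fixed points in the $\omega$-CPO $(\Expect, \leqf)$ supplied by the earlier Proposition. First I would name the operator whose least fixed point defines the loop transformer: set
\[
  F_f(F) \defsym \bracf{\bexptwo \land \bexp} \mulf \et[c]{\cmd}(F) \plusf \bracf{\bexptwo \land \neg\bexp} \mulf f \tkom
\]
so that by definition $\et[c]{\whilec[\bexptwo]{\bexp}{\cmd}}(f) = \lfp F_f$. The key observation is that the upper-invariant condition for $I_f$ is literally the statement that $I_f$ is a pre-fixed point of $F_f$, namely $F_f(I_f) \leqf I_f$. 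Thus the goal reduces to showing $\lfp F_f \leqf I_f$, which is exactly what Park's theorem delivers once $F_f$ is known to be well behaved.

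Next I would establish that $F_f$ is $\omega$-continuous on $(\Expect, \leqf)$, so that its least fixed point exists and, by the Kleene characterisation, is obtained as $\lfp F_f = \sup_{n \in \N} F_f^n(\zerof)$, with $\zerof$ the bottom element. Continuity of $F_f$ follows by composing Lemma~\ref{l:et-continuous} (continuity of $\et[c]{\cmd}$) with the continuity of the remaining pointwise operations: multiplication by the fixed Iverson brackets $\bracf{\bexptwo \land \bexp}$ and $\bracf{\bexptwo \land \neg\bexp}$, and addition of the fixed expectation $\bracf{\bexptwo \land \neg\bexp} \mulf f$, all preserve suprema of $\omega$-chains, since these operations are continuous on $\Realposinf$ pointwise in the store.

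Then I would close the argument by a short induction on $n$ showing $F_f^n(\zerof) \leqf I_f$ for every $n$. The base case is $\zerof \leqf I_f$, and the step combines monotonicity of $F_f$ (Lemma~\ref{l:et-monotone} together with monotonicity of the pointwise operations) with the pre-fixed-point inequality, giving $F_f^{n+1}(\zerof) = F_f(F_f^n(\zerof)) \leqf F_f(I_f) \leqf I_f$. Taking the supremum over $n$ and using the Kleene characterisation yields $\lfp F_f = \sup_{n \in \N} F_f^n(\zerof) \leqf I_f$, as required.

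I expect no deep obstacle here; the one point that needs care is the verification of $\omega$-continuity of $F_f$, since the definition of the loop transformer as a least fixed point already presupposes it and the Kleene characterisation of $\lfp F_f$ depends on it. Everything else is bookkeeping: the induction step itself needs only monotonicity, and the identification of the upper-invariant condition with the pre-fixed-point inequality is immediate from the definitions.
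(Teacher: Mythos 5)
Your proposal is correct and takes essentially the same route as the paper: the paper's proof consists precisely of observing that the upper-invariant condition is the pre-fixed-point inequality $F_f(I_f) \leqf I_f$ and invoking Park's theorem, whose continuity hypothesis is discharged by Lemma~\ref{l:et-continuous}. The only difference is that you additionally unfold the standard proof of Park's theorem itself (Kleene iteration $\lfp F_f = \sup_{n\in\N} F_f^n(\zerof)$ plus induction on $n$), which the paper treats as a known result.
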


\section{Modular Runtime via Size Analysis}

Let us denote by $\geq$ the usual product-extension of $\geq$ from $\Realposinf$
to $(\Realposinf)^k$, i.e., $\seq[k]{r} \geq \seq[k]{s}$
if $r_i \geq s_i$ for all $1 \leq i \leq k$.
\begin{definition}
  We call a function $f \ofdom (\Realposinf)^k \to \Realposinf$
  \begin{itemize}
  \item \emph{weakly monotone} if
    \[
      \seq[k]{r} \leq \seq[k]{s} \IImp f(\seq[k]{r}) \leq f(\seq[k]{s})\tkom
    \]
    and
  \item \emph{concave} if
  \[
    p_i \cdot f(\vec{r_i}) \leq f(p_i \cdot \vec{r_i})
  \]
  for all finite or countable infinite families of probabilities $(p_i)_{i \in I}$ ($p_i \geq 0$) with $\sum_{i \in I} p_i \leq 1$
  and vectors $(\vec{r_i})_{i \in I}$ ($\vec{r_i} \in (\Realposinf)^k$).

  \end{itemize}
\end{definition}

For functions $f \ofdom B_1 \times \cdots \times B_k \to C$ and $g_j \ofdom A^l \to B_j$ ($1 \leq j \leq k$)
let
\[
  (f \compose (\seq[k]{g}))(\seq[l]{a}) \defsym f(g_1(\seq[l]{a}),\cdots,g_k(\seq[l]{a}))
  \tpkt
\]

\begin{lemma}\label{l:ect-ect+evt}
  \[
    \ect{\cmd}(f) \leqf \ect{\cmd}(\zerof) + \evt{\cmd}(f)
  \]
\end{lemma}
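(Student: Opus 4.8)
The plan is to prove a stronger \emph{splitting} inequality, from which the lemma follows immediately: for all expectations $g, h \in \Expect$,
\[
  \et[\btrue]{\cmd}(g \plusf h) \leqf \et[\btrue]{\cmd}(g) \plusf \et[\bfalse]{\cmd}(h)
  \tpkt
\]
Taking $g \defsym \zerof$ and using $\zerof \plusf h = h$ together with $\ect{\cmd} = \et[\btrue]{\cmd}$ and $\evt{\cmd} = \et[\bfalse]{\cmd}$ gives exactly the statement. Intuitively the inequality says that when $\cmd$ is run with cost tracking on a continuation $g \plusf h$, the $g$-part contributes only as a final-state value—which the cost-free transformer $\et[\bfalse]{\cmd}$ captures as well—while the genuine cost of $\cmd$ is already accounted for by $\et[\btrue]{\cmd}(g)$; only the demonic $\maxf$ in the non-deterministic case forces an inequality rather than an equality. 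I would prove the splitting inequality by induction on the structure of $\cmd$.

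For the base cases $\skipc$, $\tickc{r}$, $\haltc$ and the assignment $\var \passign \expd$ the claim holds with equality: the first three are immediate from the defining equations of Figure~\ref{fig:et} (note that $\et[\btrue]{\tickc{r}}(g \plusf h) = \constf{r} \plusf g \plusf h$ matches $\et[\btrue]{\tickc{r}}(g) \plusf \et[\bfalse]{\tickc{r}}(h)$), and for the assignment it is exactly linearity of the expectation $\E{\expd(\st)}(\cdot)$. For the conditional and the probabilistic choice I would push the fixed weights $\bracf{\bexptwo \land \bexp}, \bracf{\bexptwo \land \neg\bexp}$, resp.\ $\constf{p}, \constf{1-p}$, through the sum, apply the induction hypotheses to the two subcommands, and regroup; as these weights are constants the arithmetic is exact. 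For non-deterministic choice I would combine the induction hypotheses with the pointwise inequality $\maxf(g_1 \plusf h_1, g_2 \plusf h_2) \leqf \maxf(g_1, g_2) \plusf \maxf(h_1, h_2)$, which is where the strict loss first appears. For sequential composition $\cmd_1; \cmd_2$ I would apply the induction hypothesis to $\cmd_2$, then monotonicity of $\et[\btrue]{\cmd_1}$ (Lemma~\ref{l:et-monotone}) to move inside, and finally the induction hypothesis to $\cmd_1$ on the split continuation $\et[\btrue]{\cmd_2}(g) \plusf \et[\bfalse]{\cmd_2}(h)$; unfolding $\et[c]{\cmd_1; \cmd_2}(\cdot) = \et[c]{\cmd_1}(\et[c]{\cmd_2}(\cdot))$ then yields precisely the right-hand side.

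The while loop is the crux and the main obstacle. Writing $F^c_e(G) \defsym \bracf{\bexptwo \land \bexp} \mulf \et[c]{\cmd_1}(G) \plusf \bracf{\bexptwo \land \neg\bexp} \mulf e$ for the loop functional, so that $\et[c]{\whilec[\bexptwo]{\bexp}{\cmd_1}}(e) = \lfp F^c_e$, I would show that the candidate
\[
  I \defsym \et[\btrue]{\whilec[\bexptwo]{\bexp}{\cmd_1}}(g) \plusf \et[\bfalse]{\whilec[\bexptwo]{\bexp}{\cmd_1}}(h)
\]
is an upper invariant for the loop with respect to $g \plusf h$ under $\et[\btrue]{\cdot}$, i.e.\ $F^\btrue_{g \plusf h}(I) \leqf I$. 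Applying the induction hypothesis for the body $\cmd_1$ to split $\et[\btrue]{\cmd_1}(I)$ and then regrouping the two Iverson-weighted summands bounds $F^\btrue_{g \plusf h}(I)$ from above by $F^\btrue_g(\lfp F^\btrue_g) \plusf F^\bfalse_h(\lfp F^\bfalse_h)$, which—since each summand is a fixed point of its own functional—equals $I$. Park's theorem, in the form of Proposition~\ref{p:upper-invariant} and using $\omega$-continuity from Lemma~\ref{l:et-continuous}, then delivers $\lfp F^\btrue_{g \plusf h} \leqf I$, the claim for the loop. The delicate point is exactly making the induction hypothesis on the body interact with the two \emph{distinct} fixed points $\lfp F^\btrue_g$ and $\lfp F^\bfalse_h$, and recognising that their sum is a pre-fixed point of the cost-tracking functional for the combined continuation $g \plusf h$.
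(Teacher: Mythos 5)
Your proposal is correct and takes essentially the same route as the paper: the paper's proof consists precisely of reducing the lemma to the stronger splitting claim $\ect{\cmd}(f \plusf g) \leqf \ect{\cmd}(f) \plusf \evt{\cmd}(g)$ (your inequality, up to renaming), to be established by structural induction. The details you supply are sound — in particular the subadditivity of $\maxf$ in the non-deterministic case, the use of monotonicity (Lemma~\ref{l:et-monotone}) for sequential composition, and the Park's-theorem argument showing that $\lfp F^{\btrue}_{g} \plusf \lfp F^{\bfalse}_{h}$ is a pre-fixed point of $F^{\btrue}_{g \plusf h}$ for the loop case.
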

\begin{proof}
  We prove the stronger claim
  \begin{equation}
    \ect{\cmd}(f \plusf g) \leqf \ect{\cmd}(f) + \evt{\cmd}(g)
    \tpkt
  \end{equation}
\end{proof}

\begin{lemma}\label{l:evt-concave}
  Let $\cmd \in \Cmd$ and let $g \ofdom (\Realposinf)^k \to \Realposinf$
  be a weakly monotone function
  that is concave if $\cmd$ is probabilistic.
  Then
  \[
    \evt{\cmd}(g \compose (\seq[k]{g}))
    \leqf g \compose (\evt{\cmd}(g_1),\dots,\evt{\cmd}(g_k))
    \tpkt
  \]
\end{lemma}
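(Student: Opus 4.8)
The plan is to induct on the structure of $\cmd$, reading the statement as a Jensen-type inequality: demonic (nondeterministic) branching is absorbed by weak monotonicity of $g$, whereas genuine probabilistic averaging is absorbed by concavity. Fix the weakly monotone outer function $g$ and abbreviate $G \defsym g \compose (\seq[k]{g})$, so that $G(\st) = g(g_1(\st),\dots,g_k(\st))$ and the goal reads $\evt{\cmd}(G) \leqf g \compose (\evt{\cmd}(g_1),\dots,\evt{\cmd}(g_k))$. Since $g$ is fixed along the induction — only the inner expectations change — its monotonicity and, where assumed, concavity remain available at every node; moreover the side condition ``concave if $\cmd$ is probabilistic'' propagates correctly, because a subcommand is probabilistic only if the whole command is, so $g$ is concave precisely whenever a probabilistic sub-case needs it, and the non-probabilistic sub-cases need only weak monotonicity.

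I would dispatch the easy cases first. For $\skipc$ and $\tickc{r}$ the transformer $\evt{\cdot} = \et[\bfalse]{\cdot}$ is the identity (the tick is discarded), so both sides coincide; for $\haltc$ both sides collapse and the claim reduces to $0 \leq g(0,\dots,0)$, which holds as $g$ maps into $\Realposinf$. The conditional $\ifc[\bexptwo]{\bexp}{\cmd}{\cmdtwo}$ is a case split on $\st$: the Iverson brackets select one branch (or none when $\st \entails \neg\bexptwo$, where again $g \geq 0$ suffices), and the claim follows from the induction hypothesis on that branch. Sequential composition $\cmd;\cmdtwo$ combines the induction hypothesis on $\cmdtwo$, monotonicity of $\evt{\cmd}$ (Lemma~\ref{l:et-monotone}), and the induction hypothesis on $\cmd$ applied to the shifted inner expectations $\evt{\cmdtwo}(g_1),\dots,\evt{\cmdtwo}(g_k)$.

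The averaging cases are the heart of the matter. For the probabilistic assignment $\var \passign \expd$, $\evt{\var\passign\expd}(G)(\st)$ unfolds to $\E{\expd(\st)}(\lambda i.\, g(g_1(\upd{\st}{\var}{i}),\dots,g_k(\upd{\st}{\var}{i})))$, while the right-hand side is $g$ applied to the componentwise expectations $\evt{\var\passign\expd}(g_j)(\st)$; the required inequality is then exactly the concavity (Jensen) bound $\sum_i p_i \cdot g(\vec{r_i}) \leq g(\sum_i p_i \cdot \vec{r_i})$ with weights $p_i$ drawn from $\expd(\st)$ (and it degenerates to an equality when the distribution is a point mass, so no concavity is consumed for deterministic assignments). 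The probabilistic choice $\pc{p}{\cmd}{\cmdtwo}$ is the same concavity bound with the two weights $p,1-p$, after first applying the induction hypotheses to $\cmd$ and $\cmdtwo$. Nondeterministic choice $\ndc{\cmd}{\cmdtwo}$ uses only weak monotonicity: from the induction hypothesis $\evt{\cmd}(G)(\st) \leq g(\evt{\cmd}(g_1)(\st),\dots)$, the pointwise bound $\evt{\cmd}(g_j) \leqf \maxf(\evt{\cmd}(g_j),\evt{\cmdtwo}(g_j)) = \evt{\ndc{\cmd}{\cmdtwo}}(g_j)$, and weak monotonicity of $g$, one gets $\evt{\cmd}(G)(\st) \leq g(\evt{\ndc{\cmd}{\cmdtwo}}(g_1)(\st),\dots)$; the symmetric bound for $\cmdtwo$ and taking the maximum on the left close the case.

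The loop is the main obstacle, and I would avoid unfolding Kleene iterates in favour of the upper-invariant machinery. Writing $F_h(\Phi) \defsym \bracf{\bexptwo \land \bexp} \mulf \evt{\cmd}(\Phi) \plusf \bracf{\bexptwo \land \neg\bexp} \mulf h$, so that $\evt{\whilec[\bexptwo]{\bexp}{\cmd}}(h) = \lfp F_h$, the goal becomes $\lfp F_G \leqf I$ with $I \defsym g \compose (\lfp F_{g_1},\dots,\lfp F_{g_k})$. I would show that $I$ is an upper invariant for the loop with respect to $G$, i.e.\ $F_G(I) \leqf I$, and conclude by Proposition~\ref{p:upper-invariant}. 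To check $F_G(I) \leqf I$ I would split on the guard at each $\st$: when $\st \entails \bexptwo \land \bexp$, the induction hypothesis on the body gives $\evt{\cmd}(I)(\st) \leq g(\evt{\cmd}(\lfp F_{g_1})(\st),\dots)$, and the fixed-point identity $\lfp F_{g_j} = F_{g_j}(\lfp F_{g_j})$ reads $\lfp F_{g_j}(\st) = \evt{\cmd}(\lfp F_{g_j})(\st)$ on this guard, so the right-hand side equals $I(\st)$; when $\st \entails \bexptwo \land \neg\bexp$ the same identity gives $\lfp F_{g_j}(\st) = g_j(\st)$, whence both sides equal $G(\st)$; and when $\st \entails \neg\bexptwo$ the left side is $0 \leq I(\st)$. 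The only delicacy is that the induction hypothesis on the body needs $g$ concave exactly when the body, hence the loop, is probabilistic — which is supplied by the hypothesis — so the argument closes uniformly.
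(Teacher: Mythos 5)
The paper states Lemma~\ref{l:evt-concave} without giving any proof, so there is no reference argument to compare against; judged on its own merits, your proof is correct and complete. The induction is set up properly: keeping the outer $g$ fixed while quantifying universally over the inner expectations $g_1,\dots,g_k$ is exactly what the sequential case needs (IH on $\cmdtwo$, then Lemma~\ref{l:et-monotone}, then IH on $\cmd$ at the shifted functions $\evt{\cmdtwo}(g_j)$), and your observation that a subcommand can be probabilistic only if the whole command is ensures the concavity hypothesis is available in every sub-case that consumes it. The assignment and probabilistic-choice cases invoke precisely the countable Jensen-type inequality that the paper's (somewhat tersely stated) notion of concavity is meant to provide, the nondeterministic case needs only weak monotonicity, and the degenerate cases ($\haltc$, failed assertions) reduce to non-negativity of $g$. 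Your loop case is the right move: setting $I \defsym g \compose (\lfp F_{g_1},\dots,\lfp F_{g_k})$, checking $F_G(I) \leqf I$ by the three-way guard split (using the fixed-point identity $\lfp F_{g_j} = F_{g_j}(\lfp F_{g_j})$ to rewrite $I$ on each guard), and concluding via Proposition~\ref{p:upper-invariant} mirrors how the paper itself treats loops in Theorem~\ref{th:comp-while}; it also correctly sidesteps a Kleene-iteration argument on both sides, which would require $g$ to commute with suprema of $\omega$-chains --- a property not among the hypotheses.
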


\begin{lemma}\label{l:ect-ect-evt-concave}
  Let $\cmd \in \Cmd$ and let $g \ofdom (\Realposinf)^k \to \Realposinf$
  be a weakly monotone function
  that is concave if $\cmd$ is probabilistic.
  Then
  \[
    \ect{\cmd}(g \compose (\seq[k]{g})) \leqf \ect{\cmd}(\zerof) + g \compose (\evt{\cmd}(g_1),\dots,\evt{\cmd}(g_k))
  \]
\end{lemma}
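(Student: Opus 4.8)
The plan is to read this statement as the immediate composition of the two preceding lemmas: Lemma~\ref{l:ect-ect+evt} peels off the cost-free leading term $\ect{\cmd}(\zerof)$, and Lemma~\ref{l:evt-concave} pushes the outer function $g$ through the value transformer $\evt{\cmd}$. First I would instantiate Lemma~\ref{l:ect-ect+evt} at the expectation $f \defsym g \compose (\seq[k]{g})$, which yields
\[
  \ect{\cmd}(g \compose (\seq[k]{g})) \leqf \ect{\cmd}(\zerof) + \evt{\cmd}(g \compose (\seq[k]{g}))
  \tpkt
\]
This already isolates the required leading term and reduces the goal to bounding the residual value term $\evt{\cmd}(g \compose (\seq[k]{g}))$ by $g \compose (\evt{\cmd}(g_1),\dots,\evt{\cmd}(g_k))$.

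Next I would observe that the hypotheses of Lemma~\ref{l:evt-concave} are exactly those assumed here: $g$ is weakly monotone, and concave whenever $\cmd$ is probabilistic. Applying that lemma gives
\[
  \evt{\cmd}(g \compose (\seq[k]{g})) \leqf g \compose (\evt{\cmd}(g_1),\dots,\evt{\cmd}(g_k))
  \tpkt
\]
Adding the fixed expectation $\ect{\cmd}(\zerof)$ to both sides of this inequality preserves $\leqf$, since $+$ and $\leqf$ are both defined pointwise and pointwise addition is monotone in each argument. Chaining the resulting inequality with the first display by transitivity of $\leqf$ then produces exactly
\[
  \ect{\cmd}(g \compose (\seq[k]{g})) \leqf \ect{\cmd}(\zerof) + g \compose (\evt{\cmd}(g_1),\dots,\evt{\cmd}(g_k))
  \tkom
\]
as claimed.

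At this level there is essentially no obstacle: the lemma is a bookkeeping corollary, and all the genuine work has been discharged in the two lemmas it invokes. The only points that need care are that the composite $g \compose (\seq[k]{g})$ denotes the same expectation in both invocations, so that the two inequalities compose cleanly, and that addition by a fixed expectation is $\leqf$-monotone, which is immediate from the pointwise definitions. The substantive difficulty — commuting the outer function $g$ with the expectation taken over the multidistribution produced by probabilistic branching, where concavity is precisely what is required — lives entirely inside Lemma~\ref{l:evt-concave}, which I take here as given.
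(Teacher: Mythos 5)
Your proof is correct and is precisely the argument the paper intends: the lemma is placed immediately after Lemma~\ref{l:ect-ect+evt} and Lemma~\ref{l:evt-concave} exactly so that it follows by instantiating the former at $f \defsym g \compose (\seq[k]{g})$, bounding the residual $\evt{\cmd}$-term with the latter, and chaining the inequalities using pointwise monotonicity of $\plusf$ (the paper omits writing this out, treating it as an immediate corollary). Nothing is missing; your two points of care — that both invocations concern the same expectation and that adding a fixed expectation preserves $\leqf$ — are the only things to check, and both hold by the pointwise definitions.
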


\begin{theorem}\label{th:comp-seq}
  Let $\cmd,\cmdtwo \in \Cmd$ and let $g \ofdom (\Realposinf)^k \to \Realposinf$
  be a weakly monotone function that is in addition concave if $\cmd$ is probabilistic.

  If $\ect{\cmdtwo}(f) \leqf g \compose (\seq[k]{g})$
  then $\ect{\cmd;\cmdtwo}(f) \leqf \ect{\cmd}(\zerof) \plusf g \compose (\evt{\cmd}(g_1),\dots,\evt{\cmd}(g_k))$.
\end{theorem}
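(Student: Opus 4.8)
The plan is to obtain the result as a direct chaining of three facts already established: the compositional definition of $\ect{\cdot}$ on sequences, monotonicity of the transformer, and the concavity bound of Lemma~\ref{l:ect-ect-evt-concave}. First I would unfold the definition of $\ect{\cdot}$ on sequential composition from Figure~\ref{fig:et}, which gives $\ect{\cmd;\cmdtwo}(f) = \ect{\cmd}(\ect{\cmdtwo}(f))$. This reduces the goal to bounding the outer application of $\ect{\cmd}$ to the inner expectation $\ect{\cmdtwo}(f)$.

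Next I would exploit the hypothesis $\ect{\cmdtwo}(f) \leqf g \compose (\seq[k]{g})$. Since $\ect{\cmd} = \et[\btrue]{\cmd}$ is monotone by Lemma~\ref{l:et-monotone}, the inequality can be pushed through the outer transformer, yielding $\ect{\cmd}(\ect{\cmdtwo}(f)) \leqf \ect{\cmd}(g \compose (\seq[k]{g}))$. Finally, because $g$ is weakly monotone and, in the probabilistic case, concave, Lemma~\ref{l:ect-ect-evt-concave} applies verbatim to the right-hand side, giving $\ect{\cmd}(g \compose (\seq[k]{g})) \leqf \ect{\cmd}(\zerof) \plusf g \compose (\evt{\cmd}(g_1),\dots,\evt{\cmd}(g_k))$. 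Transitivity of $\leqf$ then stitches the three steps together into the claimed bound.

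I do not anticipate a genuine obstacle: all the analytic content — the interaction of weak monotonicity, concavity, and the way $\evt{\cmd}$ distributes over the components $g_1,\dots,g_k$ — has been isolated into Lemma~\ref{l:ect-ect-evt-concave}, so the theorem is essentially a corollary of it. The only point needing care is verifying that the side condition on $g$ (weak monotonicity unconditionally, concavity precisely when $\cmd$ is probabilistic) coincides exactly with the hypothesis required by that lemma, which it does by construction; hence the application is immediate and the proof is short.
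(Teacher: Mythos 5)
Your proof is correct and is precisely the intended argument: the paper leaves this theorem without an explicit proof because it is an immediate corollary of Lemma~\ref{l:ect-ect-evt-concave}, obtained exactly as you do — unfold $\ect{\cmd;\cmdtwo}(f) = \ect{\cmd}(\ect{\cmdtwo}(f))$, apply monotonicity (Lemma~\ref{l:et-monotone}) to the hypothesis, then apply Lemma~\ref{l:ect-ect-evt-concave} and transitivity. Your check that the side conditions on $g$ coincide with those of the lemma is also the right point of care, and it goes through as you say.
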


\begin{theorem}\label{th:comp-while}
  Let $\cmd \in \Cmd$ and let $g \ofdom (\Realposinf)^k \to \Realposinf$
  be a weakly monotone function that is in addition concave if $\cmd$ is probabilistic.
  If
  \begin{multline*}
    \bracf{\bexptwo \land \bexp} \mulf \bigl(\ect{\cmd}(\zerof) \plusf g \compose (\evt{\cmd}(g_1),\dots,\evt{\cmd}(g_k))\bigr) \\
    \plusf \bracf{\bexptwo \land \neg\bexp} \mulf f
    \leqf g \compose (\seq[k]{g})
  \end{multline*}
  then $\ect{\whilec[\bexptwo]{\bexp}{\cmd}}(f) \leq g \compose (\seq[k]{g})$.
\end{theorem}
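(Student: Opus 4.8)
The plan is to recognise the hypothesis as precisely the statement that $I \defsym g \compose (\seq[k]{g})$ is an upper invariant for the loop $\whilec[\bexptwo]{\bexp}{\cmd}$ with respect to $f$, so that the conclusion drops out of Proposition~\ref{p:upper-invariant} instantiated at $c = \btrue$ (recall $\ect{\cmd} = \et[\btrue]{\cmd}$). Concretely, reading the upper-invariant condition at $c = \btrue$, it suffices to verify
\[
  \bracf{\bexptwo \land \bexp} \mulf \ect{\cmd}(I) \plusf \bracf{\bexptwo \land \neg\bexp} \mulf f \leqf I
  \tpkt
\]

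The key step is to bound the loop-body contribution $\ect{\cmd}(I) = \ect{\cmd}(g \compose (\seq[k]{g}))$. For this I would invoke Lemma~\ref{l:ect-ect-evt-concave}, whose hypotheses---weak monotonicity of $g$, together with concavity whenever $\cmd$ is probabilistic---are exactly those assumed in the theorem. This yields
\[
  \ect{\cmd}(g \compose (\seq[k]{g})) \leqf \ect{\cmd}(\zerof) \plusf g \compose (\evt{\cmd}(g_1),\dots,\evt{\cmd}(g_k))
  \tpkt
\]
Since the Iverson bracket $\bracf{\bexptwo \land \bexp}$ takes values in $\{0,1\}$ and is therefore nonnegative, multiplying this inequality by it preserves $\leqf$; adding $\bracf{\bexptwo \land \neg\bexp} \mulf f$ to both sides then bounds the left-hand side of the upper-invariant condition by exactly the left-hand side of the theorem's hypothesis. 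By that hypothesis the latter is $\leqf g \compose (\seq[k]{g}) = I$, and transitivity of $\leqf$ closes the verification.

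Having established that $I$ is an upper invariant, Proposition~\ref{p:upper-invariant} (with $c = \btrue$) gives $\ect{\whilec[\bexptwo]{\bexp}{\cmd}}(f) \leqf I = g \compose (\seq[k]{g})$, as required. I do not anticipate a genuine obstacle: the theorem is engineered so that Lemma~\ref{l:ect-ect-evt-concave} reduces the cost of one loop unfolding to the hypothesis, after which Park's theorem (via Proposition~\ref{p:upper-invariant}) lifts the one-step bound to the whole loop. The only points that need care are confirming that the concavity assumption on $g$ is demanded solely in the probabilistic case---matching the side condition of Lemma~\ref{l:ect-ect-evt-concave}---and that multiplication by the indicator and pointwise addition are monotone for $\leqf$, which is immediate since both operations act pointwise on expectations.
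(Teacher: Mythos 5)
Your proposal is correct and takes essentially the same route as the paper: the paper's proof likewise invokes Lemma~\ref{l:ect-ect-evt-concave} to conclude that $g \compose (\seq[k]{g})$ is an upper invariant for the loop with respect to $f$, and then applies Proposition~\ref{p:upper-invariant}. The only difference is that you spell out the monotonicity and bracket-multiplication details that the paper leaves implicit, which is a harmless (and welcome) elaboration.
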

\begin{proof}
  By Lemma~\ref{l:ect-ect-evt-concave}, $g \compose (\seq[k]{g})$ is an upper invariant
  for $\whilec[\bexptwo]{\bexp}{\cmd}$ with respect to $f$.
  Thus, the theorem follows by Proposition~\ref{p:upper-invariant}.
\end{proof}

\section{Implementation}%
\label{s:implementation}

\subsection{Cost Expressions and Constraint Systems}

While keeping program expressions and cost functions abstract for the theoretical development we fix the scope for the implementation.

We define $\Exp$ as terms over integer-valued variables $x \in \Var$, integers $i \in \Z$ and arithmetic functions $\{+,*\}$ representing addition and multiplication.
We define $\BExp$ as inequalities of expressions $a,b \in \Exp$ together with logical connectives $\{\neg, \wedge, \vee\}$ representing logical negation, logical and, and logical or.
\begin{align*}
  a,b             & \bnfdef x             \mid i          \mid a + b                 \mid a * b \\
  \bexp, \bexptwo & \bnfdef a \geqslant b \mid \neg \bexp \mid \bexp \wedge \bexptwo \mid \bexp \vee \bexptwo
\end{align*}
In the implementation we restrict $\DExp$ to be finite distributions over integer expressions $a$,
in notation $\{p_1 \colon a_1,\ldots,p_n \colon a_n\}$.

To provide an intuitive notion of bounds and facilitate automation we introduce \emph{cost expressions}.
\begin{align*}
  m,n             & \bnfdef \cnorm{a} \mid m * n \mid 1 \\
  c,d             & \bnfdef q \cdot m \mid c \cadd d \mid \cmax(c,d) \mid \bracf{\bexp} \cmul c
\end{align*}
We usually write $k$ instead of $k \cmul 1$ for constant expressions $k \in \N$ and $m$ instead of $q \cdot m$ if $q = 1$.
The evaluation function of cost expressions is also denoted by $\eva{\cdot} \colon \CExp \to \Sigma \to \Qpos$.
Notice that $\eva{c} \in \Expect$.

To automate the cost inference of programs we provide a variation of the expectation transformer (cf.~Figure~\ref{fig:et})
$\etcost[c]{\cdot}\ofdom {\Cmd \to \CExp \to \CExp}$
(as well as $\mathsf{ect}^\sharp$ and $\mathsf{evt}^\sharp$).
\begin{figure}
  \centering
  \begin{framed}
    \vspace{-\baselineskip}
    \begin{align*}
      \etcost{\skipc}(f)                               & \defsym f \\
      \etcost[c]{\tickc{r}}(f)                         & \defsym \bracf{c} \cmul r \cadd f \\
      \etcost{\haltc}(f)                               & \defsym 0 \\
      \etcost{\var \passign \{ p_1 \colon a_1, \ldots, p_n \colon a_n\}}(f) & \defsym \textstyle \sum_i^n p_i \cmul f[x/ai] \\
      \etcost{\ifc[\bexptwo]{\bexp}{\cmd}{\cmdtwo}}(f) & \defsym \bracf{\bexptwo \land \bexp} \cmul \etcost{\cmd}(f) \cadd \bracf{\bexptwo \land \neg\bexp} \cmul \etcost{\cmdtwo}(f) \\
      \etcost{\ndc{\cmd}{\cmdtwo}}(f)                  & \defsym \cmax(\etcost{\cmd}(f),\etcost{\cmdtwo}(f)) \\
      \etcost{\pc{p}{\cmd}{\cmdtwo}}(f)                & \defsym p \cmul \etcost{\cmd}(f) \cadd (1-p) \cmul \etcost{\cmdtwo}(f) \\
                                                       & \\
      \etcost{\cmd; \cmdtwo}(f) & \defsym \etcost{\cmd}(\etcost{\cmdtwo}(f)) \\
      \ectcost{\cmd; \cmdtwo}(f) & \defsym \ectcost{\cmd}(0) + \ctxone[\seq{h}]\text{, where}\\
      \shortintertext{%
        \vspace{-\baselineskip}
        \begin{align*}
          \ectcost{\cmdtwo}(\eva{f}) & = C[\seq{g}]                                              \\
          g                          & \defsym \lambda \seq{f} . \eva{C[\seq[k]{f}]} \emph{concave} \\
          \evt{\cmd}(\eva{g_i})      & \leqf \eva{h_i}
      \end{align*}
      }
      \ectcost{\whilec[\bexptwo]{\bexp}{\cmd}}(f) & \defsym \ctxone[\seq[k]{g}] \\
      \shortintertext{%
        \vspace{-\baselineskip}
        \begin{align*}
          g                          & \defsym \lambda \seq{f} . \eva{C[\seq[k]{f}]} \emph{concave} \\
        \evt{\cmd}(\eva{g_i}) & \leqf \eva{h_i} \\
        \bracf{\bexptwo \land \bexp} & \entails \eva{\ectcost{\cmd}(0) + \ctxone[\seq[k]{h}]}\leqf \eva{\ctxone[\seq[k]{g}]} \\
        \wedge \bracf{\bexptwo \land \neg \bexp} & \entails \eva{f} \leqf \eva{\ctxone[\seq[k]{g}]}
      \end{align*}
      }
    \end{align*}
  \end{framed}
  \caption{Definition of expected cost expression transformer $\etcost{\cdot}$.}
  \label{fig:et-cost}
\end{figure}

\begin{theorem}\label{l:ct-gt-et}
  For all commands $\cmd \in \Cmd$ and cost expressions $f \in \CExp$

  $\et{C}(\eva{f}) \leqf \eva{\etcost[c]{C}(f)}$ \tpkt
\end{theorem}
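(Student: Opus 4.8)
The plan is to argue by structural induction on $\cmd$, exploiting that the transformers of Figures~\ref{fig:et} and~\ref{fig:et-cost} follow the same recursion on every constructor except loops (and the modular sequencing rule). As a preliminary I would record that $\eva{\cdot}$ is a homomorphism for the cost-expression operators,
\[
  \eva{c \cadd d} = \eva{c} \plusf \eva{d}, \quad
  \eva{p \cmul c} = \constf{p} \mulf \eva{c}, \quad
  \eva{\cmax(c,d)} = \maxf(\eva{c},\eva{d}), \quad
  \eva{\bracf{\bexp} \cmul c} = \bracf{\bexp} \mulf \eva{c},
\]
together with a substitution lemma $\eva{f[x/a]}(\st) = \eva{f}(\upd{\st}{x}{\eva{a}(\st)})$; both follow directly from the definition of $\eva{\cdot}$.

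The base cases $\skipc$, $\tickc{r}$ and $\haltc$ then hold with equality. For the assignment $\var \passign \{p_1 \colon a_1, \dots, p_n \colon a_n\}$ I would combine the substitution lemma with the fact that $\evdist{\cdot}$ yields exactly the distribution $\prms{p_i : \eva{a_i}(\st)}$, so that both sides evaluate at $\st$ to $\sum_i p_i \cdot \eva{f}(\upd{\st}{x}{\eva{a_i}(\st)})$.

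For the compositional constructors $\ifc[\bexptwo]{\bexp}{\cmd}{\cmdtwo}$, $\ndc{\cmd}{\cmdtwo}$, $\pc{p}{\cmd}{\cmdtwo}$ and plain sequencing $\cmd;\cmdtwo$ the argument is uniform: apply the induction hypothesis to the subcommands, push $\eva{\cdot}$ inside via the homomorphism, and invoke monotonicity of $\plusf$, $\mulf$ and $\maxf$ to preserve the inequality. Sequencing needs one extra step, since $\et[c]{\cmd;\cmdtwo}(\eva{f}) = \et[c]{\cmd}(\et[c]{\cmdtwo}(\eva{f}))$: I would first apply the induction hypothesis for $\cmdtwo$, then lift the resulting bound through the outer transformer using monotonicity (Lemma~\ref{l:et-monotone}), and finally apply the induction hypothesis for $\cmd$ to the cost expression $\etcost[c]{\cmdtwo}(f)$.

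The hard part will be the loop case and the modular sequencing rule, where the cost-expression transformer abandons the fixed-point recursion of $\et[c]{\cdot}$ and delegates to the size-analysis machinery, forcing $c = \btrue$ so that the transformer specialises to $\ectcost{\cdot}$. For $\ectcost{\whilec[\bexptwo]{\bexp}{\cmd}}(f) = \ctxone[\seq[k]{g}]$ the side conditions attached in Figure~\ref{fig:et-cost} --- concavity of $g = \lambda \seq{f}.\eva{C[\seq[k]{f}]}$, the size bounds $\evt{\cmd}(\eva{g_i}) \leqf \eva{h_i}$, and the two guarded invariance inequalities --- match the premise of Theorem~\ref{th:comp-while} once the induction hypothesis for the body is used to bound $\ect{\cmd}(\zerof) \leqf \eva{\ectcost{\cmd}(0)}$ and weak monotonicity of $g$ is used to pass from $g \compose (\seq[k]{h})$ to $g \compose (\evt{\cmd}(g_1),\dots,\evt{\cmd}(g_k))$. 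Theorem~\ref{th:comp-while} then gives $\ect{\whilec[\bexptwo]{\bexp}{\cmd}}(f) \leqf g \compose (\seq[k]{g}) = \eva{\ctxone[\seq[k]{g}]}$, the desired bound, and the modular sequencing rule is discharged analogously through Theorem~\ref{th:comp-seq}. The technical care needed is to verify that the evaluated context $g$ is genuinely weakly monotone (and concave precisely when $\cmd$ is probabilistic) and that the abstract size bounds $\eva{h_i}$ soundly dominate $\evt{\cmd}(\eva{g_i})$, so that the hypotheses of these modular theorems are actually met.
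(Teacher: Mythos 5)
Your proposal follows essentially the same route as the paper's proof: structural induction with equality in the base cases, induction hypothesis plus monotonicity (Lemma~\ref{l:et-monotone}) for the compositional constructors and plain sequencing, and discharge of the modular sequencing and loop cases by verifying the side conditions of Figure~\ref{fig:et-cost} against the premises of Theorems~\ref{th:comp-seq} and~\ref{th:comp-while}, using the induction hypothesis to bound $\ect{\cmd}(\zerof) \leqf \eva{\ectcost{\cmd}(0)}$ and weak monotonicity of $g$ to pass through the size bounds $\evt{\cmd}(\eva{g_i}) \leqf \eva{h_i}$. Your explicit statement of the evaluation homomorphism and substitution lemmas only makes precise what the paper uses implicitly, so the two arguments coincide in substance.
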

\begin{proof}
  The proof is by induction on the structure of $\cmd$.
  \begin{proofcases}
    \case{$\etcost{\skipc}(f) \defsym f$}
    Then $\et{\skipc}(\eva{f}) = \eva{f} = \eva{\etcost{\skipc}(f)}$.

    \case{$\ectcost{\tickc{r}}(f) \defsym r + f $}
    Then $\ect{\tickc{r}}(\eva{f}) = \constf{r} \plusf \eva{f} = \eva{r} \plusf \eva{f} = \eva{r \cadd f} = \eva{\ectcost{\tickc{r}}(f)}$.

    \case{$\evtcost{\tickc{r}}(f) \defsym f $}
    Then $\evt{\tickc{r}}(\eva{f}) = \eva{f} = \eva{\ectcost{\tickc{r}}(f)}$.

    \case{$\ectcost{\haltc}(f) \defsym 0 $}
    Then $\et{\haltc}(\eva{f}) = \zerof = \eva{0} = \eva{\ectcost{\haltc}(f)}$.

    \case{$\etcost{\var \passign \{p_1 \colon a_1, \ldots, p_n \colon a_n \}}(f) \defsym \sum_i^n p_i \cdot f[x/a_i] $}
    Then%
    \begin{align*}
    \et{\var \passign \{p_1 \colon a_1, \ldots, p_n \colon a_n\}}(\eva{f}) 
      &= \lambda \st. \E{\{ p_1 \colon \eva{a_1}\st, \ldots, p_n \colon \eva{a_n}\st \}}(\lambda i. \eva{f}(\upd{\st}{\var}{i})) \\
      &= \textstyle \lambda \st. \sum_i^n p_i \cdot (\lambda i . \eva{f}(\upd{\st}{\var}{i})(\eva{a_i}\sigma)) \\
      &= \textstyle \sum p_i \cdot \eva{f[x/a_i]} \\
      &= \textstyle \eva{\sum_i^n p_i \cdot f[x/a_i]} \\
      &= \eva{\etcost{\var \passign \{p_1 \colon a_1, \ldots, p_n \colon a_n \}}(f)} \tpkt
    \end{align*}

    \case{$\etcost{\ifc[\bexptwo]{\bexp}{\cmd}{\cmdtwo}}(f) \defsym \bracf{\bexptwo \land \bexp} \mulf \etcost{\cmd}(f) \plusf \bracf{\bexptwo \land \neg\bexp} \mulf \etcost{\cmdtwo}(f)$}
      Then
      \begin{align*}
        \et{\ifc[\bexptwo]{\bexp}{\cmd}{\cmdtwo}}(\eva{f}) 
          & = \bracf{\bexptwo \land \bexp} \mulf \et{\cmd}(\eva{f}) \plusf \bracf{\bexptwo \land \neg\bexp} \mulf \etcost{\cmdtwo}(\eva{f}) \\
          & \leqf \bracf{\bexptwo \land \bexp} \mulf \eva{\etcost{\cmd}(f)} \plusf \bracf{\bexptwo \land \neg\bexp} \mulf \eva{\etcost{\cmdtwo}(f)} \\
          & = \eva{\bracf{\bexptwo \land \bexp} \cmul \etcost{\cmd}(f) \cadd \bracf{\bexptwo \land \neg\bexp} \cadd \etcost{\cmdtwo}(f)} \\
          & = \eva{\etcost{\ifc[\bexptwo]{\bexp}{\cmd}{\cmdtwo}}(f)} \tpkt
      \end{align*}

      \case{$\etcost{\ndc{\cmd}{\cmdtwo}}(f) \defsym \cmax(\etcost{\cmd}(f),\etcost{\cmdtwo}(f))$}
      Then
      \begin{align*}
        \et{\ndc{\cmd}{\cmdtwo}}(\eva{f})
          & = \maxf(\et{\cmd}(\eva{f}),\et{\cmdtwo}(\eva{f}))  \\
          & \leqf \maxf(\eva{\etcost{\cmd}(f)},\eva{\etcost{\cmdtwo}(f)}) \\
          & = \eva{\cmax(\etcost{\cmd}(f),\etcost{\cmdtwo}(f))}
          = \eva{\etcost{\ndc{\cmd}{\cmdtwo}}(f)}
      \end{align*}

      \case{$\etcost{\pc{p}{\cmd}{\cmdtwo}}(f) = p \cmul \etcost{\cmd}(f) \cadd (1-p) \cmul \etcost{\cmd}(f)$}
      Then
      \begin{align*}
        \et{\pc{p}{\cmd}{\cmdtwo}}(\eva{f})
          &= \constf{p} \mulf \et{\cmd}(\eva{f}) \plusf (\constf{1-p}) \mulf \et{\cmdtwo}(\eva{f}) \\
          & \leqf \constf{p} \mulf \eva{\etcost{\cmd}(f)} \plusf (\constf{1-p}) \mulf \eva{\etcost{\cmdtwo}(f)} \\
          &= \eva{ p \cmul \etcost{\cmd}(f) \cadd (1-p) \cmul \etcost{\cmdtwo}(f) }
          = \eva{\etcost{\pc{p}{\cmd}{\cmdtwo}}(f)}
      \end{align*}

      \case{$\etcost{\cmd; \cmdtwo}(f) \defsym \etcost{\cmd}(\etcost{\cmdtwo}(f))$}
      Then
      \begin{align*}
        \et{\cmd; \cmdtwo}(\eva{f})
          &= \et{\cmd}(\et{\cmdtwo}(\eva{f})) \\
          &\leqf \et{\cmd}(\eva{\etcost{\cmdtwo}(f)}) \\
          &\leqf \eva{\etcost{\cmd}(\etcost{\cmdtwo}(f))}
          = \eva{\etcost{\cmd; \cmdtwo}(f)}
      \end{align*}

      \case{$\ectcost{\cmd; \cmdtwo}(f) \defsym \ectcost{\cmd}(0) + \ctxone[\seq{h}] $}
      Then.

      By IH 
      $\ect{\cmdtwo}(\eva{f}) \leqf \eva{\ectcost{\cmdtwo}(f)} 
      = \eva{\ctxone[\seq[k]{g}]} 
      = g \compose (\eva{g_1},\ldots,\eva{g_k})$
      
      By Thorem~\ref{th:comp-seq}.
      \begin{align*}
        \ect{\cmd; \cmdtwo}(\eva{f})
        &= \ect{\cmd}(\zerof) \plusf g \compose (\evt{\cmd}(\eva{g_1}),\ldots,\evt{\cmd}(\eva{g_k})) \\
        &\leqf \eva{\ectcost{\cmd}(0)} \plusf g \compose (\eva{h_1},\ldots,\eva{h_k}) \\
        &= \eva{\ectcost{\cmd}(0)} \plusf \eva{\ctxone[h_1,\ldots,h_k]} \\
        &= \eva{\ectcost{\cmd}(0) \cadd \ctxone[h_1,\ldots,h_k]}
        = \eva{\ectcost{\cmd; \cmdtwo}(f)}
      \end{align*}

      \case{$\ectcost{\whilec[\bexptwo]{\bexp}{\cmd}}(f) \defsym \ctxone[\seq[k]{g}]$}

      By assumption.
      \begin{alignat*}{2}
        & \bracf{\bexptwo \land \bexp} && \entails \eva{\ectcost{\cmd}(0) + \ctxone[\seq[k]{h}]}\leqf \eva{\ctxone[\seq[k]{g}]} \\
        \wedge & \bracf{\bexptwo \land \neg \bexp} && \entails \eva{f} \leqf \eva{\ctxone[\seq[k]{g}]} \\
               &&& \mathpar{By assumption $g \defsym \lambda \seq{f} . \eva{C[\seq[k]{f}]}$ concave.} \\
                 & \bracf{\bexptwo \land \bexp} && \entails \eva{\ectcost{\cmd}(0)} \plusf g \compose (\eva{h_1},\ldots,\eva{h_k}) \leqf g \compose (\eva{g_1},\ldots,\eva{g_k}) \\
               \wedge & \bracf{\bexptwo \land \neg \bexp} && \entails \eva{f} \leqf g \compose (\eva{g_1},\ldots,\eva{g_k}) \\
                      &&& \mathpar{By IH $\ect{\cmd}(\zerof) \leqf \ectcost{\cmd}(0)$. By assumption $\evt{\cmd}(\eva{g_i} \leqf \eva{h_i})$. } \\
                 & \bracf{\bexptwo \land \bexp} && \entails \ect{C}(\zerof) \plusf g \compose (\evt{\cmd}(\eva{g_1}),\ldots,\evt{\cmd}(\eva{g_k})) \leqf g \compose (\eva{g_1},\ldots,\eva{g_k}) \\
               \wedge & \bracf{\bexptwo \land \neg \bexp} && \entails \eva{f} \leqf g \compose (\eva{g_1},\ldots,\eva{g_k})
      \end{alignat*}
      By Theorem~\ref{th:comp-while}.
      \begin{align*}
        \ect{\whilec[\bexptwo]{\bexp}{\cmd}}(\eva{f})
        & \leqf g \compose (\eva{g_1},\ldots,\eva{g_k})  \\
        & = \eva{\ctxone[\seq[k]{g}]} 
        = \eva{\ectcost{\whilec[\bexptwo]{\bexp}{\cmd}}(f)}
      \end{align*}
  \end{proofcases}
\end{proof}

\bibliographystyle{plain}
\bibliography{references}

\begin{thebibliography}{1}

\bibitem{Dijkstra75}
E.~W. Dijkstra.
\newblock {Guarded Commands, Nondeterminacy and Formal Derivation of Programs}.
\newblock {\em ACM}, 18(8):453--457, 1975.

\bibitem{KaminskiKMO16}
B.~Lucien Kaminski, J.{-}P. Katoen, C.~Matheja, and F.~Olmedo.
\newblock Weakest precondition reasoning for expected run-times of
  probabilistic programs.
\newblock In {\em Proc.\ of \nth{25} ESOP}, volume 9632 of {\em LNCS}, pages
  364--389. Springer, 2016.

\bibitem{OlmedoKKM16}
F.~Olmedo, B.~Lucien Kaminski, J.-P. Katoen, and C.~Matheja.
\newblock Reasoning about recursive probabilistic programs.
\newblock In {\em Proc.\ of\ \nth{16} LICS}, pages 672--681, 2016.

\end{thebibliography}
\end{document}
\grid